\numberwithin{equation}{section}
  \newcommand{\F}{\mathbb{F}}
\newcommand{\myalpha}{\alpha} \newcommand{\mybeta}{\beta} \newcommand{\mytau}{\tau}
\newcommand{\gauss}[3]{{#1 \brack #2}_#3}
\theoremstyle{definition}
\newtheorem{theorem}{Theorem}[section] \newtheorem{lemma}[theorem]{Lemma}
\newtheorem{corollary}[theorem]{Corollary} \newtheorem{proposition}[theorem]{Proposition}
 \newtheorem{remark}[theorem]{Remark}
\newcommand{\prp}[1]{\textcolor{purple}{#1}}
\tikzstyle{startstop} = [rectangle, rounded corners, minimum width=3cm, minimum height=1cm, text centered, text width=3cm, draw=black, fill=red!30]
\tikzstyle{io} = [trapezium, trapezium stretches=true, trapezium left angle=70, trapezium right angle=110, minimum width=3cm, minimum height=1cm, text centered, draw=black, fill=blue!30]
\tikzstyle{process} = [rectangle, minimum width=3cm, minimum height=1cm, text centered, text width=3cm, draw=black, fill=orange!30]
\tikzstyle{decision} = [diamond, minimum width=3cm, minimum height=1cm, text centered, draw=black, fill=green!30]
\tikzstyle{arrow} = [thick,->,>=stealth]
\newcommand{\rmv}[1]{}
\newcommand{\myalphatilde}{\tilde{\alpha}}
\newcommand{\mybetatilde}{\tilde{\beta}}
\begin{document}

\title{{The Random Variables of the DNA Coverage Depth Problem}}

\author{\c{S}eyma Bodur \and Stefano Lia \and 
Hiram H. L\'opez \and Rati Ludhani \and Alberto Ravagnani \and Lisa Seccia}
\date{}

% \keywords{DNA storage; coverage depth problem; random access; MDS; Hamming; simplex; expected value; variance; higher moments}

% \subjclass[2010]{94B05;}

\maketitle

\begin{abstract}
DNA data storage systems encode digital data into DNA strands, enabling dense and durable storage. Efficient data retrieval depends on coverage depth, a key performance metric. We study the random access coverage depth problem and focus on minimizing the expected number of reads needed to recover information strands encoded via a linear code. 
We compute the asymptotic performance of a recently proposed code construction, establishing and refining a conjecture in the field by giving two independent proofs.
We also analyze a geometric code construction based on balanced quasi-arcs and optimize its parameters. Finally, we investigate the full distribution of the random variables that arise in the coverage depth problem, of which the traditionally studied expectation is just the first moment. This allows us to distinguish between code constructions that, at first glance, may appear to behave identically.
\end{abstract}

\bigskip

\section{Introduction}
DNA data storage is a promising solution to the exponentially increasing demand for sustainable and scalable data storage~\cite{BarLev2025CoverYB}. 
DNA offers significant advantages over traditional storage media~\cite{Lee2019, Shomorony2022},
due to its density, durability, and sustainability. 
A typical DNA storage system consists of three main processes: \emph{synthesis}, \emph{storage}, and \emph{sequencing}~\cite{BarLev2025CoverYB}. First, digital data is encoded into sequences over the DNA alphabet and synthesized into DNA strands. These strands are then stored in an unordered fashion within a container. When data needs to be retrieved, the 
stored strands are randomly sequenced
and converted back to digital sequences~\cite{VANDIJK2018666, Weirather2017}. 
When attempting to retrieve data by random sampling,
multiple copies of each strand are obtained, many of which will turn out to be useless.
This issue is captured by the notion of \emph{coverage depth}, defined as the ratio between the number of sequenced reads and the number of synthesized DNA strands~\cite{Heckel2019}. In recent work \cite{BarLev2025CoverYB}, the \emph{random access coverage depth problem} was introduced to formally characterize the efficiency of data recovery from DNA.

The random access coverage depth problem, which is the focus of this paper, can be briefly illustrated as follows.
Assume that $k$ information strands (representing the data) are encoded into $n$ strands via a matrix $G \in \F_q^{k \times n}$, often interpreted as the generator matrix of an error-correcting code. The columns of $G$ are drawn uniformly at random and with repetition.  
What is the expected number of draws needed for the standard basis vector~$e_i$ to be in the span of the drawn columns? What is the maximum of these expectations, say $\mathbb{E}[G]$, as $i \in \{1, \ldots, k\}$? And finally, what is the smallest value that can be obtained for such a maximum as $G$ ranges over the rank $k$ matrices in $\F_q^{k \times n}$?

The questions just stated are typically difficult to answer and have attracted growing interest in bounds and  constructions for matrices~$G$~\cite{BarLev2025CoverYB,GruicaMontanucciZullo2024,GruicaBarLevRavagnaniYaakobi2024,coveringallbases,boruchovsky2025makingfirstrandomaccess,bertuzzoravagnani,cao2025optimizingsequencingcoveragedepth}. Even when such constructions perform well in practice, precisely evaluating their performance and computing the value of $\mathbb{E}[G]$ defined above can remain challenging. That is the case of two interesting constructions proposed in~\cite{BarLev2025CoverYB} and~\cite{GruicaMontanucciZullo2024}, which we analyze in this paper. 

The code construction of~\cite{BarLev2025CoverYB} is given by a $k \times 2k$ matrix. While the matrix has a relatively simple structure, computing $\mathbb{E}[G]$ is a nontrivial task. The authors derive a recursive description and based on experimental results conjecture that $\lim_{k \to +\infty} \mathbb{E}[G]/k<0.9456$, a conjecture that was later established in~\cite{GruicaMontanucciZullo2024}, without however computing the exact value of the limit and thus the code's performance.
The code construction of~\cite{GruicaMontanucciZullo2024} uses instead balanced quasi-arcs, mathematical objects inspired by finite geometry. The construction depends on the choice of two parameters,~$x$ and~$y$. In this context, the challenging task is to compute the ratio $y/x$ offering the best performance.

We now briefly describe how the rest of the paper is organized, which also gives us the chance to highlight the main contribution made by this work.
In Section~\ref{a particular code}, we focus on the conjecture proposed in~\cite{BarLev2025CoverYB}. We give a closed expression for the limit $\lim_{k \to +\infty} \mathbb{E}[G]/k$, offering two independent proofs. The first proof completely solves the recurrence relation from~\cite{BarLev2025CoverYB} using an argument inspired by the matrix exponentiation method; the second proof uses a stand-alone argument based on combinatorial identities.
The limit value we obtain is a curious expression involving $\pi$ and $\sqrt{3}$.

In Section~\ref{quasi-arcs}, we concentrate on the construction from~\cite{GruicaMontanucciZullo2024} based on
balanced quasi-arcs. We are able to give a closed formula for the limit of the expectation as a function of the ratio $y/x$. As an application, we 
compute the optimal value of $y/x$. 

Finally, in Section~\ref{higher}, we consider the problem of computing the higher moments of the random variables that arise in the coverage depth random access problem.
We show that the quantities often used to compute the expectation $\mathbb{E}[G]$ actually determine all moments.
In Section~\ref{famous codes}, we apply the results from Section~\ref{higher} to show how to assess the performance of codes where the value of $\mathbb{E}[G]$ is the same, most notably 
MDS, Hamming, and Simplex codes.

\paragraph*{Acknowledgements.}
This work was initiated during the ``Coding Theory and Cryptography Summer School and Collaboration Workshop'', held at the Stager Center for International Scholarship -- Virginia Tech, Switzerland, in July 2024. The authors are very grateful to the organizers of the event for facilitating this collaboration.
\c{S}eyma Bodur was partially supported by grants PID2022-138906NB-C21 funded by MICIU/AEI/10.13039/501100011033 and by ERDF/EU, and by grant CONTPR-2019-385 funded by Universidad de Valladolid and Banco Santander.
Stefano Lia was partially supported by the Irish Research Council, grant n. GOIPD/2022/307.
Hiram H. L\'opez was partially supported by the NSF grant DMS-2401558.
Rati Ludhani was supported by Prime Minister’s Research Fellowship PMRF-192002-256 at IIT Bombay and acknowledges the NBHM travel grant to attend the event above.
Alberto Ravagnani was partially supported by the Dutch Research Council via grant OCENW.KLEIN.53.
Lisa Seccia was partially supported by SNSF grant TMPFP2\_217223.

\section{Expectation of the rate 1/2 code from\texorpdfstring{~\cite{BarLev2025CoverYB}}{ [BarLev2025CoverYB] }}
\label{a particular code}

In this section, we concentrate on a code construction proposed in~\cite{BarLev2025CoverYB}, computing the asymptotic performance of the code and establishing a conjecture of the authors.
We start by establishing the notation for the rest of the paper.

In the sequel, $G \in \F_q^{k \times n}$ is a rank $k$ matrix over the finite field $\F_q$ with $q$ elements. 
For $1\le j\le n$, fix the notation $G^j$ to denote the $j$-th column of $G$.

We draw the columns of $G$ uniformly at random and with repetition. For $i \in \{1, \ldots, k\}$, let $\tau_i(G)$ be the random variable that governs the number of columns of $G$ that are drawn until the standard basis vector $e_i$ is in their $\F_q$-span.
The variable $\tau_i(G)$ therefore models the time of retrieval of the $i$-th information strand. 
The random access coverage depth problem is to compute or estimate the expectation \( \mathbb{E}[\tau_i(G)] \).

We now turn to the code of \cite[Construction~1]{BarLev2025CoverYB}, specified by a $k\times 2k$ generator matrix $G_{k\times2k}$ whose columns are $e_1,e_2,\dots,e_k,e_1+e_2,e_2+e_3,\dots,e_{k-1}+e_{k},e_k+e_1$, where $e_1,\ldots,e_k$ are the standard basis vectors of $\F_q^k$:
$$G_{k \times 2k} =
\begin{pmatrix}
1 & 0 & \cdots & 0 & 1 & 0 & \cdots & 0 & 1 \\
0 & 1 & \cdots & 0 & 1 & 1 & \cdots & 0 & 0 \\
\vdots & \vdots & \ddots & \vdots & \vdots & \vdots & \ddots & \vdots & \vdots \\
0 & 0 & \cdots & 1 & 0 & 0 & \cdots & 1 & 1 \\
\end{pmatrix}
$$

In~\cite{BarLev2025CoverYB}, the authors observe with computational results that
$\smash{\mathbb{E}(\tau_i(G_{k\times2k}))<k}$ for all $i$. It is not difficult to see that
$\mathbb{E}(\tau_i(G_{k\times2k}))$ does not depend on $i$. Furthermore, the authors of~\cite{BarLev2025CoverYB}
conjecture that the value $\smash{\ell_k=\mathbb{E}(\tau_i(G_{k\times2k}))/k}$ decreases when $k$ increases, with $\ell=\lim_{k\rightarrow\infty}\ell_k<0.9456$; see~\cite[Conjecture 1]{BarLev2025CoverYB}.
The conjecture was later investigated in \cite{GruicaMontanucciZullo2024}, where the authors prove the following result.

\begin{theorem}[\text{\cite[Corollary 5.9]{GruicaMontanucciZullo2024}}]
    We have $\limsup_{k \to \infty} \ell_k \le \frac{70318847}{74364290} \approx 0.945599655$.
\end{theorem}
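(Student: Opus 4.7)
The plan is to reinterpret the columns of $G_{k\times 2k}$ as vertices and edges of the cycle $C_k$, reducing the coverage question to a random-graph connectivity problem, and then to compute the asymptotic expectation of $\tau_1$ via Poissonization and an explicit integration.

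First I would check that the $2k$ columns of $G_{k\times 2k}$ split naturally into the $k$ \emph{vertex columns} $e_1,\dots,e_k$ and the $k$ \emph{edge columns} $e_i + e_{i+1 \pmod k}$ of $C_k$. A short linear-algebra argument (cleanest in characteristic $2$, but asymptotically correct for any $q$) shows that, after drawing a multiset $S$ of columns, $e_1 \in \textup{span}(S)$ if and only if vertex $1$ lies in the same connected component as some drawn vertex in the subgraph of $C_k$ induced by $S$. Thus $\mathbb{E}[\tau_1(G_{k\times 2k})]$ is the expected first-passage time for vertex $1$ to lie in a covered component of a random spanning subgraph of $C_k$.

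Next I would Poissonize: assume each column is drawn according to an independent Poisson process of rate $1/(2k)$, so that by time $t$ each column has appeared with probability $p = p(t) = 1 - e^{-t/(2k)}$, independently. Writing $u = 1 - p$, the event $\{\tau_1 > t\}$ factors, up to wraparound corrections vanishing as $k\to\infty$, into the events that $e_1$ is not drawn and that the clockwise and counterclockwise chains of drawn edges starting at vertex $1$ each hit an undrawn edge before reaching any drawn vertex. Each direction contributes the geometric sum
\[
\sum_{j \ge 1} p^{j-1}(1-p)\, u^{j-1} \;=\; \frac{u}{1-u+u^2},
\]
so $\mathbb{P}(\tau_1 > t) \to u(t)^3/(1-u(t)+u(t)^2)^2$ as $k\to\infty$. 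Using $\mathbb{E}[\tau_1] = \int_0^\infty \mathbb{P}(\tau_1 > t)\, dt$ and substituting $u = e^{-t/(2k)}$ yields
\[
\lim_{k\to\infty} \frac{\mathbb{E}[\tau_1(G_{k\times 2k})]}{k} \;=\; 2\int_0^1 \frac{u^2}{(1-u+u^2)^2}\, du,
\]
which, via the trigonometric substitution $u = \tfrac{1}{2} + \tfrac{\sqrt{3}}{2}\tan\theta$, evaluates in closed form to $\frac{8\pi\sqrt{3}}{27} - \frac{2}{3} \approx 0.945599435$. Since this is strictly smaller than $\frac{70318847}{74364290} \approx 0.945599655$, the theorem follows from any sufficiently accurate rational approximation of $\pi$ and $\sqrt{3}$.

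The main obstacle is the rigorous depoissonization together with the handling of the cyclic wraparound: for finite $k$, the clockwise and counterclockwise chains from vertex $1$ are not strictly independent, and the Poisson number of draws fluctuates around its mean on the scale of $\sqrt{t}$. Truncating chains at length $O(\sqrt{k})$ and controlling the tail via the exponential decay of the geometric sum, combined with a standard concentration argument to switch between the Poisson and uniform sampling models, should deliver the inequality $\limsup_k \ell_k \le \frac{8\pi\sqrt{3}}{27} - \frac{2}{3}$, and hence the claimed rational bound; care is needed to ensure the error terms have the correct sign for a $\limsup$ bound.
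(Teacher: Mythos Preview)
Your proposal is essentially correct and, in fact, proves the stronger Theorem~\ref{thm:conjproof} (the exact limit $\tfrac{8\sqrt{3}\pi-18}{27}$), from which the stated rational bound follows immediately. The paper itself does not prove the cited bound directly; it establishes the exact limit via two purely combinatorial arguments: (i) solving the recurrence for $B(k,j)$ from~\cite{BarLev2025CoverYB} by matrix exponentiation and then summing using the identity~\eqref{eq:identity1} and the central-binomial series from~\cite{Sprugnoli2006}; (ii) computing $\alpha_i(G_{k\times 2k},s)$ explicitly by inclusion--exclusion over minimal recovery sets and plugging into Proposition~\ref{expected}.

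Your route is genuinely different and more probabilistic: the reinterpretation of columns as vertices and edges of $C_k$ turns recovery into a connectivity event, and Poissonization decouples the columns so that $\mathbb{P}(\tau_1>t)$ factorizes into the product $u\cdot\bigl(\tfrac{u}{1-u+u^2}\bigr)^2$ up to a wraparound error. The integral then produces $\tfrac{8\sqrt{3}\pi-18}{27}$ directly, without any of the paper's binomial manipulations or series identities. What you gain is a transparent explanation of where $\pi$ and $\sqrt{3}$ come from (the quadratic $1-u+u^2$ with discriminant $-3$); what you lose relative to the paper's second proof is the explicit finite-$k$ data $\alpha_i(G_{k\times 2k},s)$, which the paper later exploits for higher moments in Section~\ref{higher}.

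One simplification you may have missed: your worry about depoissonization is largely unnecessary for the \emph{expectation}. If columns arrive at total rate $1$, then conditional on the draw sequence the Poisson recovery time has mean exactly equal to the discrete draw count, so $\mathbb{E}_{\mathrm{Poisson}}[\tau_1]=\mathbb{E}_{\mathrm{fixed}}[\tau_1]$ identically for every $k$. Thus the only analytic work is bounding the wraparound error, and since the clockwise and counterclockwise chain lengths have geometric tails with ratio $pu=u(1-u)\le\tfrac14$, that error is $O((1/4)^{k/2})$ uniformly in $u$, which is more than enough. Your remark that the graph characterization is exact only in characteristic~$2$ is also correct: over odd characteristic with $k$ odd, the full edge-cycle alone already spans $e_1$, but this event has negligible probability and does not affect the limit.
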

Note that the previous result suffices to answer the open question in~\cite{BarLev2025CoverYB}, although it does not prove the limit exists. Furthermore, the proof
in~\cite{GruicaMontanucciZullo2024} relies on lengthy calculations.

In this paper, we take a completely different approach, which allows us to prove that
$\lim_{k \to \infty} \ell_k$ exists and to give a simple, closed expression for its value. The result is a curious expression involving both radicals and $\pi$. More precisely, we will prove the following.

\begin{theorem}\label{thm:conjproof}
For all $i\in \{1, \ldots, k\}$, we have
\[
\lim_{k \to \infty} \frac{\mathbb{E}[\tau_i (G_{k\times2k})]}{k}=\frac{8\sqrt{3}\pi-18}{27}.
\]
\end{theorem}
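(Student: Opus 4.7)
My plan is to recast $\tau_i(G_{k\times 2k})$ as a first-passage time on the cycle $C_k$, rescale time by a factor of $k$, and reduce the limit to an explicit integral that I evaluate in closed form. By the cyclic symmetry of $G_{k\times 2k}$ it suffices to treat $i=1$. I identify the singleton columns $e_1,\ldots,e_k$ with the vertices and the pair columns $e_j+e_{j+1}$ with the edges of $C_k$ (indices mod $k$). A short linear-algebra check shows that, outside the event that all $k$ edge columns are drawn, $e_i$ lies in the $\F_q$-span of a drawn multiset of columns if and only if the connected component of $i$ in the subgraph of drawn edges contains at least one drawn vertex. If $T_c$ denotes the first time column $c$ is drawn, this gives
\[
\tau_i \;=\; \min_{j}\;\max\!\bigl(T_{e_j},\,\max_{e\in P_{ij}}T_e\bigr),
\]
where $P_{ij}$ is one of the two arcs of $C_k$ from $i$ to $j$.

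Under the rescaling $t=sk$, a Poissonization of the draws makes the events $\{c\text{ drawn by time }sk\}$ asymptotically independent across columns, each occurring with probability $p:=1-e^{-s/2}$. Passing to the infinite-line limit centered at vertex $0$, the right- and left-lengths $R,L$ of the drawn-edge component of $0$ are independent geometric variables with $\Pr(R=r)=p^r(1-p)$, and each vertex in the component is independently drawn with probability $p$. This yields
\[
\Pr(\tilde\tau_0>s) \;=\; (1-p)\,\mathbb{E}[(1-p)^{R+L}] \;=\; \frac{(1-p)^3}{(1-p+p^2)^2}.
\]
A uniform-integrability argument based on the geometric-type tails of $\tau_1$ then lets me exchange the limit and expectation in $\mathbb{E}[\tau_1]/k=k^{-1}\sum_{t\ge 0}\Pr(\tau_1>t)$, so that $\lim_{k\to\infty}\mathbb{E}[\tau_1]/k=\int_0^{\infty}(1-p)^3/(1-p+p^2)^2\,ds$.

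Setting $u=e^{-s/2}$ converts this integral to $2\int_0^1 u^2/(1-u+u^2)^2\,du$. The decomposition $u^2=(1-u+u^2)-\tfrac12(1-2u)-\tfrac12$ splits the integrand into three pieces: $\int_0^1 du/(1-u+u^2)=2\pi\sqrt3/9$, $\int_0^1(1-2u)/(1-u+u^2)^2\,du=0$ (its antiderivative $1/(1-u+u^2)$ takes the value $1$ at both endpoints), and $\int_0^1 du/(1-u+u^2)^2=4\pi\sqrt3/27+2/3$ (via $u-1/2=(\sqrt3/2)\tan\theta$). Assembling these gives $(8\sqrt3\pi-18)/27$, matching the claim. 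I expect the main obstacle to be making the scaling-limit step rigorous: justifying the asymptotic independence of the draw times, controlling the cycle wrap-around uniformly in $s$, and passing from the Riemann sum over $t$ to the integral over $s$. All three issues should yield to Poissonization together with geometric tail bounds on $\tau_1$; the remaining integral manipulations are routine.
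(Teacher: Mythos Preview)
Your proposal is correct and takes a genuinely different route from the paper. The paper gives two purely combinatorial proofs: one solves the second-order recurrence for $B(k,j)$ from~\cite{BarLev2025CoverYB} by matrix exponentiation and then evaluates the resulting double sum via the central-binomial series $\sum_t 1/\binom{2t}{t}$ and $\sum_t 1/((2t+1)\binom{2t}{t})$; the other bypasses the recurrence by computing the quantities $\alpha_i(G_{k\times 2k},s)$ directly through an inclusion--exclusion over minimal recovery sets and then takes the same kind of asymptotic sum. In both cases the constants $\pi$ and $\sqrt{3}$ enter through known closed forms for those binomial series.

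Your argument is probabilistic rather than enumerative: you recognise the recovery condition as a percolation/first-passage event on the cycle $C_k$, rescale time, pass to an i.i.d.\ Bernoulli model on $\mathbb{Z}$, and reduce the limit to a single integral $2\int_0^1 u^2/(1-u+u^2)^2\,du$. Here $\pi$ and $\sqrt3$ appear for a structurally transparent reason (an $\arctan$ coming from the irreducible quadratic $1-u+u^2$), and the method would adapt to other ``graph-patterned'' generator matrices for which the paper's explicit counting would have to be redone from scratch. The cost is that the exchange of limit and expectation must be justified. Your sketch already contains the right ingredients: negative association of the indicators $\mathbb{1}\{T_c\le t\}$ (multinomial counts are NA, hence so are monotone functionals) localises the event to an $O(1)$ window around $i$ for each fixed $s$, giving the pointwise convergence $\Pr(\tau_i>\lfloor sk\rfloor)\to (1-p)^3/(1-p+p^2)^2$; the crude bound $\Pr(\tau_i>t)\le\Pr(T_{e_i}>t)=(1-1/(2k))^t\le e^{-t/(2k)}$ furnishes the integrable dominator $e^{-s/2}$ for the DCT; and the discrepancy between $\tau_i$ and your graph-based surrogate is confined to the event ``all $k$ edge columns drawn, no vertex column drawn,'' whose total contribution to the tail sum is at most $\sum_{t\ge k}2^{-t}=2^{-k+1}=o(1)$. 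With those three points spelled out, your proof is complete and arguably more conceptual than either argument in the paper, though it does not yield the exact finite-$k$ formulas that the paper's $\alpha_i$ computation provides as a byproduct.
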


We provide two proofs for Theorem~\ref{thm:conjproof}. The first builds on a result from~\cite{BarLev2025CoverYB}, while the second is a standalone argument. We include both arguments as they offer different approaches and tool sets.
The result from~\cite{BarLev2025CoverYB} that we need for the first proof is the following.

\begin{theorem}\cite[Theorem 10]{BarLev2025CoverYB}\label{thm:expC(2k,k)}
For any $k \geq 2$ and any $i \in \{1,\dots,k\}$, we have
\begin{align}\label{eq:BSGYformula}
\mathbb{E}[\tau_i (G_{k\times2k})] &= 1 + \sum_{j=1}^{2k-3} B(k-1, j) \, \frac{2k}{(2k - j)\binom{2k}{j}},
\end{align}
where
\begin{align*}
B(k, j) = \begin{cases}
\binom{2k-1}{j} + 2B(k-1, j-1) - B(k-2, j-2) & \text{if } k \geq 2, j \geq 2,\\
1 & \text{if } k \geq 0, j = 0\text{ or } k = 1, j = 2,\\
2k + 1 & \text{if } k \geq 0, \, j = 1,\\
0 & \text{if } k = 0, \,  j \geq 2 \text{ or }k = 1, j \geq 3. \\
\end{cases}
\end{align*}
\end{theorem}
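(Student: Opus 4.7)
The plan is to start from formula~\eqref{eq:BSGYformula} in Theorem~\ref{thm:expC(2k,k)}, convert the sum into an integral via the Beta-function identity $\tfrac{1}{(2k-j)\binom{2k}{j}} = \int_0^1 t^{j}(1-t)^{2k-j-1}\, dt$, and recognize the resulting inner sum as a generating function. Explicitly, set $f_k(x) := \sum_{j \geq 0} B(k,j)\, x^j$; then for $k \geq 3$ (so that $2k-3$ exceeds the degree of $f_{k-1}$), one obtains
\[
\mathbb{E}[\tau_i(G_{k \times 2k})] \;=\; 1 + 2k \int_0^1 (1-t)^{2k-1} \Bigl(f_{k-1}\bigl(\tfrac{t}{1-t}\bigr) - 1\Bigr)\, dt.
\]
The task then reduces to identifying the leading-order behavior of $f_{k-1}(t/(1-t))$ as $k \to \infty$.

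The recurrence for $B(k,j)$ in Theorem~\ref{thm:expC(2k,k)} translates into the functional recurrence $f_k(x) = (1+x)^{2k-1} + 2x\, f_{k-1}(x) - x^{2} f_{k-2}(x)$. Its homogeneous part has characteristic polynomial $(r-x)^{2}$, so all homogeneous solutions are of the form $(\alpha(x) + k\beta(x))\, x^{k}$. Seeking a particular solution of the form $C(x)(1+x)^{2k-1}$ and using the identity $(1+x)^{4} - 2x(1+x)^{2} + x^{2} = (1+x+x^{2})^{2}$, I obtain $C(x) = (1+x)^{4}/(1+x+x^{2})^{2}$. The initial conditions $f_0 = 1+x$ and $f_1 = 1 + 3x + x^{2}$ pin down $\alpha, \beta \in \mathbb{Q}(x)$. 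Substituting $x = t/(1-t)$ converts the particular part into $(1-t)^{3-2k}/(1-t+t^{2})^{2}$, so multiplication by $(1-t)^{2k-1}$ yields the $k$-independent density $(1-t)^{2}/(1-t+t^{2})^{2}$.

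To conclude the asymptotic analysis, I need to show that the homogeneous contribution is negligible after integration. The substitution $x = t/(1-t)$ turns $x^{k-1}$ into $t^{k-1}(1-t)^{-(k-1)}$, so after multiplying by $(1-t)^{2k-1}$ the corresponding integrand takes the form $t^{k-1}(1-t)^{k}$ times a rational function of $t$ that is bounded on $[0,1]$. The Beta estimate $\int_0^1 t^{k-1}(1-t)^{k}\, dt = \frac{(k-1)!\, k!}{(2k)!} = O(4^{-k}/\sqrt{k})$ shows that these contributions vanish even after multiplication by $k$. Combining this with the exact cancellation $-2k \int_0^1 (1-t)^{2k-1}\, dt = -1$, I arrive at
\[
\lim_{k \to \infty} \frac{\mathbb{E}[\tau_i(G_{k \times 2k})]}{k} \;=\; 2 \int_0^1 \frac{(1-t)^{2}}{(1-t+t^{2})^{2}}\, dt.
\]

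Finally, the integral is computed explicitly via the shift $u = t - \tfrac{1}{2}$, which makes the denominator $(u^{2} + 3/4)^{2}$; the odd part of the numerator cancels by symmetry, and writing $1/4 + u^{2} = (u^{2} + 3/4) - 1/2$ splits the integral into standard $\arctan$-type pieces that evaluate to $\tfrac{\pi}{3\sqrt 3}$ and $\tfrac{1}{3} + \tfrac{2\pi}{9\sqrt 3}$, combining to the claimed value $\tfrac{8\sqrt{3}\,\pi - 18}{27}$. The main obstacle is the uniform control in the asymptotic passage: one must verify that the homogeneous component of $f_{k-1}$ is not only pointwise dominated by the particular solution but remains negligible after weighting by $(1-t)^{2k-1}$ and multiplying by $k$, which is precisely what the explicit Beta-integral estimate above delivers.
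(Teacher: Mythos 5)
Your proposal does not actually prove the statement you were asked to prove. Theorem~\ref{thm:expC(2k,k)} is the exact, finite-$k$ identity $\mathbb{E}[\tau_i(G_{k\times 2k})] = 1 + \sum_{j=1}^{2k-3} B(k-1,j)\,\frac{2k}{(2k-j)\binom{2k}{j}}$ together with the stated recursion and boundary values for $B(k,j)$; establishing it requires an analysis of the random-access process for the specific $k\times 2k$ matrix (e.g.\ identifying $B(k-1,j)$ with a count of size-$j$ non-recovery sets, or equivalently computing the quantities $\alpha_i(G_{k\times 2k},s)$ as in Lemma~\ref{lem:conjalpha}, and then invoking a tail-sum formula such as Proposition~\ref{expected}). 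Your argument opens with ``start from formula~\eqref{eq:BSGYformula} in Theorem~\ref{thm:expC(2k,k)}'', i.e.\ it assumes the very identity to be proved and supplies no derivation of it, nor of the recursion for $B(k,j)$. As a proof of the statement in question it is therefore circular: every subsequent step (the Beta-function representation $\frac{1}{(2k-j)\binom{2k}{j}} = \int_0^1 t^j(1-t)^{2k-j-1}\,dt$, the generating function $f_k$, the particular solution $(1+x)^4/(1+x+x^2)^2$, the error estimates) concerns the asymptotics of a formula whose validity is exactly what needed to be shown. Note also that in the paper this theorem is quoted from the reference (with a typo correction), so a proof would have to reconstruct that reference's combinatorial argument, which your text never touches.

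What you have in fact written is an alternative proof of Theorem~\ref{thm:conjproof}, the limit $\lim_k \mathbb{E}[\tau_i(G_{k\times 2k})]/k = \frac{8\sqrt{3}\pi-18}{27}$, and as such it is attractive and, as far as I can check, correct: the Beta integral converts the sum into $2k\int_0^1(1-t)^{2k-1}\bigl(f_{k-1}(t/(1-t))-1\bigr)dt$, the recurrence gives the particular solution whose weighted form is the $k$-independent density $(1-t)^2/(1-t+t^2)^2$, the homogeneous part $(\alpha+k\beta)x^{k}$ contributes only Beta integrals of size $O(4^{-k}/\sqrt{k})$, and $2\int_0^1 (1-t)^2/(1-t+t^2)^2\,dt = \frac{8\sqrt{3}\pi-18}{27}$ indeed matches. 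This is genuinely different from both proofs in Section~\ref{a particular code} (matrix exponentiation plus combinatorial identities; respectively the $\alpha_i$ computation), replacing term-by-term asymptotic estimates with a single dominated passage to a closed-form integral, and it would be a nice addition if the boundary cases of the functional recurrence (the separate values of $B(k,j)$ for $j\le 1$ and small $k$, and the exact degree bound on $f_{k-1}$) were written out. But it answers a different question; to meet the task you must supply the combinatorial derivation of \eqref{eq:BSGYformula} and of the recursion for $B(k,j)$ themselves.
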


Note that there is a typo in the formula for the expected value $\mathbb{E}[\tau_i (G_{k\times2k})]$ in \cite[Theorem~10]{BarLev2025CoverYB}. The correct term in the summation is $B(k-1, j)$, not $B(k, j)$.

The approach of~\cite{GruicaMontanucciZullo2024}
also relies on the previous theorem,
giving a refined first-order recurrence for
$B(k,j)$ and using some clever approximations. 
In the following lemma, we show how to obtain the exact value of $B(k,j)$ with a straightforward application of the matrix exponentiation method.

\begin{lemma}\label{lem:B(k,i)}
Let $B(k,j)$ be as in Theorem~\ref{thm:expC(2k,k)}. We have
\[
B(k,j)=\sum_{t=1}^{min\{k-1,\, j-1\}}\binom{2(k-t+1)-1}{j-t+1}\, t+\sigma(j,k),
\]
where $\sigma(j,k)=(2(k-j+1)+1)j -(j-1)$ for $j\leq k$, $\sigma(k+1,k)=1$, and $\sigma(j,k)=0$ for $j\geq k+2$.
\end{lemma}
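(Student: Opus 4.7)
The plan is to fix $(k,j)$ and collapse the two-dimensional recurrence to a one-dimensional one along the diagonal. Setting $g(s) := B(k-s,\, j-s)$, the recurrence $B(k,j)=\binom{2k-1}{j}+2B(k-1,j-1)-B(k-2,j-2)$ becomes
\[
g(s+2)-2g(s+1)+g(s)\;=\;c_s,\qquad c_s:=\binom{2(k-s)-1}{j-s},
\]
valid whenever $k-s\ge 2$ and $j-s\ge 2$, that is, for $0\le s\le \min\{k,j\}-2$. This is a second-order linear recurrence with constant coefficients whose characteristic polynomial is $(x-1)^2$: in the matrix-exponentiation picture, this is a Jordan block at the eigenvalue $1$, and the Jordan structure is precisely what produces the extra linear piece captured by $\sigma(j,k)$.

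Next, I would identify the terminal data of $g$ using the base cases of $B$. Letting $T$ be the largest $s$ for which the recurrence above applies, there are three regimes. If $j\le k$, then $T=j-2$, and the boundary values are $g(j-1)=B(k-j+1,1)=2(k-j+1)+1$ and $g(j)=B(k-j,0)=1$. If $j=k+1$, then $T=k-2$ and $g(k-1)=B(1,2)=1$, $g(k)=B(0,1)=1$. If $j\ge k+2$, again $T=k-2$, but now $g(k-1)=B(1,j-k+1)=0$ and $g(k)=B(0,j-k)=0$. In all three cases $T+1=\min\{k-1,\,j-1\}$, matching the summation range in the statement.

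The final step is a telescoping summation. From $\Delta g(s+1)-\Delta g(s)=c_s$ one obtains $\Delta g(t)=\Delta g(T+1)-\sum_{s=t}^{T}c_s$, and summing $\Delta g$ from $0$ to $T$ gives, after swapping the order of summation,
\[
B(k,j)\;=\;g(0)\;=\;g(T+1)\;-\;(T+1)\,\Delta g(T+1)\;+\;\sum_{s=0}^{T}(s+1)\,c_s.
\]
The reindexing $t=s+1$ turns the final sum into $\sum_{t=1}^{T+1} t\binom{2(k-t+1)-1}{j-t+1}$, matching the claimed binomial sum. It then remains to verify in each of the three regimes that the boundary contribution $g(T+1)-(T+1)\Delta g(T+1)$ equals $\sigma(j,k)$: in the $j\le k$ case it reduces to $2(k-j+1)+1+(j-1)\cdot 2(k-j+1)=(2(k-j+1)+1)j-(j-1)$; in the $j=k+1$ case it equals $1$; and in the $j\ge k+2$ case it vanishes, as required.

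The main obstacle is purely bookkeeping rather than conceptual: one must carefully track the range of validity of the diagonal recurrence and which boundary conditions apply as $j$ crosses the thresholds $k$, $k+1$, and $k+2$. Once this case split is in place, everything reduces to a routine telescoping argument and an elementary algebraic simplification to recognise $\sigma(j,k)$.
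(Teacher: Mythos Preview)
Your proof is correct and follows essentially the same approach as the paper. Both arguments unroll the recursion along the diagonal $(k,j)\mapsto(k-1,j-1)$ and exploit the fact that the homogeneous part has characteristic polynomial $(x-1)^2$; the paper packages this as the matrix recursion $\tilde B(k,j)=A\,\tilde B(k-1,j-1)+F(k,j)$ with $A=\begin{psmallmatrix}2&-1\\1&0\end{psmallmatrix}$ and computes $A^l$ explicitly, while you phrase the same computation as a double telescoping of $\Delta^2 g(s)=c_s$ --- the two are equivalent, and you even note the Jordan-block interpretation yourself.
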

\begin{proof}
We define two vectors in $\mathbb{R}^2$ as follows: $$\tilde{B}{(k,j)} = {\begin{pmatrix}
   B{ (k,j)} \\
  B{ (k-1,j-1)}
\end{pmatrix}} \quad \text{ and } \quad 
F{(k,j)} = \begin{pmatrix}
   \binom{2k-1}{j} \\
  0 
\end{pmatrix}.$$ 
With this notation, for $k\geq 2$ and $j\geq 2$ we write the recursion as
\[
\tilde{B}{ (k,j)}=
A
\tilde{B}{ (k-1,j-1)}
+
F{(k,j)}=
A(A
\tilde{B}{ (k-2,j-2)}
+
F{ (k-1,j-1)})
 +F{ (k,j)}=..., 
\]
where 
\[
A=
\begin{bmatrix}
  2 & -1 \\
  1 & 0 
\end{bmatrix}\quad\text{ and }\quad
A^l=
\begin{bmatrix}
  l+1 & -l \\
  l & 1-l 
\end{bmatrix}.
\]
Expanding the expression above, we obtain
\begin{equation*}
\tilde{B}(k,j)=
\begin{cases}
F{ (k,j)}+
AF{ (k-1,j-1)}
+\cdots+
%A^{k-2}F{ (k-(k-2),j-(k-2))}+
A^{k-1}\tilde{B}{(k-(k-1),j-(k-1))}
&\text{if }\quad j\geq k+1,\\
F{  (k,j)}+
AF{ (k-1,j-1)}
+\cdots+
%A^{i-2}F{ (k-(i-2),j-(i-2))}+
A^{j-1}\tilde{B}{(k-(j-1),j-(j-1))}& \text{if }\quad j\leq k.
\end{cases}
\end{equation*}
The base cases $\tilde{B}{ (k,1)}$ and $\tilde{B}{ (1,j)}$ are as in Theorem \ref{thm:expC(2k,k)}.
The result is now the equality obtained by considering only the first entry of each vector. 
\end{proof}

We can now establish Theorem~\ref{thm:conjproof}, providing a complete answer to the open question from~\cite{BarLev2025CoverYB}.

\begin{proof}[Proof of Theorem~\ref{thm:conjproof}]
We substitute the value of $B(k-1,i)$ computed by Lemma~\ref{lem:B(k,i)} in~\eqref{eq:BSGYformula} and
exchange the order of summation. We obtain 

\begin{eqnarray*}
&\displaystyle \frac{\mathbb{E}[\tau_j(G_{k\times2k})]}{k}
& =\frac{1}{k}\left(\sum_{t=1}^{k-2} \sum_{i=t+1}^{2k-3} t\binom{2k-2t-1}{i-t+1} \frac{1}{\binom{2k-1}{i}}+\sum_{i=1}^{k-1}  \frac{2(k-i)i+1}{\binom{2k-1}{i}}\right)\\
&& =\frac{1}{k}\left(\sum_{t=1}^{k-2} \sum_{i=t+1}^{2k-3} \frac{t \binom{2k-2t-1}{i-t+1}}{\binom{2k-1}{i}}+\sum_{i=1}^{k-1}  \frac{2(k-i)i}{\binom{2k-1}{i}}\right).
\end{eqnarray*}
We investigate the two summations separately. On the one hand, we have
\[
\lim_{k\to \infty} \frac{1}{k}\sum_{i=1}^{k-1}  \frac{2(k-i)i}{\binom{2k-1}{i}}=0,
\] 
since the summand is bounded from above by $\smash{\frac{2ki^{i+1}}{(2k-1)^i}}$. 
On the other hand, we have
\begin{eqnarray*}
        & \displaystyle\sum_{t=1}^{k-2} \sum_{i=t+1}^{2k-3} \frac{t \binom{2k-2t-1}{i-t+1}}{\binom{2k-1}{i}}& =\sum_{t=1}^{k-2} t \sum_{i=t+1}^{2k-3} \frac{\binom{2k-2t-1}{i-t+1}}{\binom{2k-1}{i}} \\
        && = \sum_{t=1}^{k-2} t \sum_{i=t+1}^{2k-3} \frac{i(i-1)\cdots (i-t+2)(2k-i-1)\cdots (2k-i-t-1)}{(2k-1)\cdots (2k-2t)} \\
        && \sim  \sum_{t=1}^{k-2} t \sum_{i=t+1}^{2k-3} \frac{i^{t-1} (2k-i)^{t+1}}{(2k)^{2t}}\\
        %&& = \sum_{t=1}^{k-2} \frac{t}{(2k)^{2t}} \sum_{i=t+1}^{2k-3} i^{t-1}\sum_{s=0}^{t+1}(-1)^s \binom{t+1}{s}(2k)^{t+1-s} i^s\\
        && = \sum_{t=1}^{k-2} \frac{t}{(2k)^{2t}} \sum_{s=0}^{t+1}(-1)^s \binom{t+1}{s} (2k)^{t+1-s} \sum_{i=t+1}^{2k-3} i^{t-1+s} \\
        && \sim \sum_{t=1}^{k-2} \frac{t}{(2k)^{2t}} \sum_{s=0}^{t+1}(-1)^s \binom{t+1}{s} (2k)^{t+1-s} \frac{(2k)^{t+s}}{t+s},
        %&& =2k \sum_{t=1}^{k-2} t \sum_{s=0}^{t+1}(-1)^s \binom{t+1}{s}  \frac{1}{t+s}\\
\end{eqnarray*}
where we used the standard Buchmann-Landau notation and all estimates are for $k \to \infty$.
Next, using the identity~\cite[Equation 5.41]{graham1994concrete}: 
\begin{equation}\label{eq:identity1}
    \displaystyle \frac{1}{\binom{z+w}{w}}= z \sum_{r=0}^{w}(-1)^r \binom{w}{r}  \frac{1}{z+r},
\end{equation}
the expression simplifies as
\begin{eqnarray*}
        \displaystyle\sum_{t=1}^{k-2} \sum_{i=t+1}^{2k-3} \frac{t \binom{2k-2t-1}{i-t+1}}{\binom{2k-1}{i}} = 2k \sum_{t=1}^{k-2} \frac{1}{\binom{2t+1}{t+1}}=k \sum_{t=1}^{k-2} \left(1+\frac{1}{2t+1}\right) \frac{1}{\binom{2t}{t}}.
    \end{eqnarray*}
    We finally use two standard combinatorial identities; see e.g.~\cite[Theorem 3.4]{Sprugnoli2006}):
    $$\displaystyle
     \sum_{t=0}^{\infty} \frac{1}{\binom{2t}{t}}= \frac{2\pi \sqrt{3}}{27}+\frac{4}{3},  \quad  \quad \sum_{t=0}^{\infty} \frac{1}{(2t+1)\binom{2t}{t}}=\frac{2\sqrt{3}\pi}{9},
    $$
    obtaining
    \begin{eqnarray*}
    \displaystyle
    \displaystyle \lim_{k \to \infty}\frac{1}{k}\sum_{t=1}^{k-2} \sum_{i=t+1}^{2k-3} \frac{t \binom{2k-2t-1}{i-t+1}}{\binom{2k-1}{i}}=\frac{2\pi \sqrt{3}}{27}+\frac{4}{3}+\frac{2\sqrt{3}\pi}{9}-2 =\frac{-18+8\sqrt{3}\pi}{27},
    \end{eqnarray*}
    which is the desired result. 
\end{proof}

We now turn to the second proof of Theorem~\ref{thm:conjproof}, which offers a straightforward argument that avoids using Theorem~\ref{thm:expC(2k,k)}.
Our approach uses some quantities introduced in~\cite{GruicaBarLevRavagnaniYaakobi2024} and defined as follows:
    \[\alpha_i(G,s)=|\{S \subseteq \{1, \ldots,n\} \,: \, |S|=s, \, e_i \in \langle G^j \mid  j \in S\rangle\}|\] 
%\rati{where for $1\le j\le n$, $G^j$ denotes the $j$-th column of $G$. }
The expectation $\mathbb{E}[\tau_i(G)]$ can be expressed in terms of the $\alpha_i(G,s)$'s, as the following result illustrates. 

\begin{proposition}[\text{\cite[Lemma 1]{GruicaBarLevRavagnaniYaakobi2024}}]\label{expected}
We have
\begin{equation*}
\mathbb{E}[\tau_i(G)]=\sum_{s=0}^{n-1} \frac{\binom{n}{s}-\alpha_i(G,s)}{\binom{n-1}{s}}.
\end{equation*}
\end{proposition}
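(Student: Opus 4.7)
The plan is to reduce the uniform sampling with replacement to a uniform sampling without replacement by tracking only the order in which distinct column indices first appear. Let $\sigma$ denote the random permutation of $\{1,\ldots,n\}$ recording that order, and for $m=1,\ldots,n$ let $X_m$ be the number of draws needed to see the $m$-th new column (so $X_1=1$). A standard fact, of coupon-collector flavor, is that $\sigma$ is uniformly distributed on the symmetric group $S_n$, that $X_1,\ldots,X_n$ are mutually independent with $X_m$ geometric of parameter $(n-m+1)/n$, and that $\sigma$ is independent of the family $(X_m)$. I would either invoke this result directly or establish it with a short computation of the joint probabilities of first-appearance times.

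Introducing the auxiliary stopping time $\tau'_i(G) := \min\{m : e_i \in \langle G^{\sigma(1)},\ldots, G^{\sigma(m)}\rangle\}$, the decomposition
$$\tau_i(G) = \sum_{m=1}^{\tau'_i(G)} X_m$$
is immediate. Since $\tau'_i(G)$ is a function of $\sigma$ alone, it is independent of the sequence $(X_m)$. Conditioning on $\tau'_i(G)$ and applying linearity of expectation therefore yields
$$\mathbb{E}[\tau_i(G)] \;=\; \sum_{m=1}^{n} \mathbb{E}[X_m]\,\Pr[\tau'_i(G) \geq m] \;=\; \sum_{m=1}^{n} \frac{n}{n-m+1}\,\Pr[\tau'_i(G) \geq m].$$

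Because $\sigma$ is uniform, the set $\{\sigma(1),\ldots,\sigma(m-1)\}$ is a uniformly random $(m-1)$-subset of $\{1,\ldots,n\}$, so $\Pr[\tau'_i(G)\geq m]$ is precisely the proportion of $(m-1)$-subsets $S$ with $e_i \notin \langle G^j : j \in S\rangle$, namely $(\binom{n}{m-1}-\alpha_i(G,m-1))/\binom{n}{m-1}$. Substituting, reindexing with $s = m-1$, and using the identity $n/((n-s)\binom{n}{s}) = 1/\binom{n-1}{s}$ produces the claimed formula.

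The only delicate step is the initial distributional claim about $\sigma$ and $(X_m)$; everything afterwards is routine bookkeeping. This claim is folklore but deserves to be stated cleanly at the outset so that the Wald-type identity used in the middle step is fully rigorous. A self-contained alternative would be to compute $\Pr[\tau_i(G)>s]$ directly as the fraction of bad $s$-tuples over $n^s$, but the inclusion-exclusion required to pass between multisets of draws and subsets of distinct columns would be substantially messier than the decomposition above.
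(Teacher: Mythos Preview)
Your argument is correct. The coupon-collector decoupling---recording the first-appearance permutation $\sigma$ and the geometric waiting times $X_m$, then applying a Wald-type identity with $\tau'_i$ independent of the $X_m$'s---is a clean and complete proof of the formula.

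The paper does not actually prove this proposition; it is quoted from~\cite{GruicaBarLevRavagnaniYaakobi2024}. However, the paper's own machinery suggests a different route, namely the $p=1$ case of Theorem~\ref{p-th moment general formula}: start from the tail-sum $\mathbb{E}[\tau_i(G)]=\sum_{r\ge 0}\mathbb{P}[\tau_i(G)>r]$, use~\eqref{obs} to write $\mathbb{P}[\tau_i(G)>r]=(n^r-\beta_i(G,r))/n^r$, substitute the $\alpha\beta$-inversion of Lemma~\ref{alphabeta inversions}, and then sum the resulting Stirling-number series via the generating function~\eqref{eq from infinite to finite product}. This is essentially the ``self-contained alternative'' you dismiss in your last paragraph as messier, though the paper handles the passage between tuples and subsets cleanly through the $\beta_i$'s and Stirling numbers rather than by raw inclusion-exclusion. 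Your probabilistic decomposition is more transparent for the first moment and avoids any infinite series; the paper's algebraic route, while heavier here, is what allows the uniform treatment of all higher moments in Section~\ref{higher}.
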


The first step is precisely the computation of the $\alpha_i$'s.

\begin{lemma}\label{lem:conjalpha}
Let \( G_{k \times 2k} \) be the generator matrix of the \([2k,k]\) binary GRS code defined in Section~\ref{a particular code}. For any \( 1 \le i \le k \) and \( 0 \le s \le 2k-1 \), the number \( \alpha_i(G_{k \times 2k}, s) \) of recovery sets of size \( s \) for coordinate \( i \) is given by
\begin{multline*}
\alpha_i(G_{k \times 2k},s) = \binom{2k}{s} - \Biggl[ \binom{2k-1}{s} - 2\sum_{\ell=2}^{k} \binom{2k - 2\ell + 1}{s - \ell} + \\ \sum_{j=4}^{k+1}(j-3)\binom{2k - 2j + 3}{s - j} + (k-1)\delta_{s,k+1} \Biggr],
\end{multline*}
where \( \delta_{s,k+1} \) is the Kronecker delta, and $\binom{a}{b}=0$ whenever $b>a$.
\end{lemma}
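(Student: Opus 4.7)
My plan is to express $\alpha_i(G_{k\times 2k}, s)$ via an inclusion-exclusion on the \emph{minimal clockwise} and \emph{minimal counterclockwise} recovery paths from the vertex $i$ in the cycle graph $C_k$.

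First, I would identify the columns of $G_{k \times 2k}$ with the vertex and edge sets of $C_k$: the first $k$ columns correspond to vertices via $e_j$, and the last $k$ columns to edges $\{u,v\}$ via $e_u + e_v$. A subset $S$ decomposes uniquely as a disjoint union $V_S \sqcup E_S$ with $V_S \subseteq V(C_k)$ and $E_S \subseteq E(C_k)$. Applying the dual condition for $e_i \notin \langle S \rangle$ (existence of $v \in \mathbb{F}_2^k$ with $v_i = 1$ and $v \cdot c = 0$ for every column $c \in S$), one sees that, over $\mathbb{F}_2$, the edge constraints force $v$ to be constant on connected components of the subgraph $(V(C_k), E_S)$. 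Combining with the vertex constraints yields the characterization: $e_i \in \langle S \rangle$ if and only if $e_i \in S$ or some $j \in V_S$ is connected to $i$ by edges in $E_S$.

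Next, for $\ell \in \{2, \ldots, k\}$, I define $L_+(S) = \ell$ to mean that $\ell$ is the smallest index for which the $\ell - 1$ consecutive clockwise edges from $i$ lie in $E_S$ and $e_{i + \ell - 1} \in V_S$, and $L_-(S)$ analogously in the counterclockwise direction. By the characterization above, every recovery set $S$ with $e_i \notin S$ has $L_+(S) < \infty$ or $L_-(S) < \infty$. Setting $D_0 = \{S : e_i \in S\}$ and $D_\pm = \{S : e_i \notin S,\; L_\pm(S) < \infty\}$, one has $\alpha_i = |D_0| + |D_+| + |D_-| - |D_+ \cap D_-|$. The condition $L_\pm(S) = \ell$ forces $\ell$ columns into $S$ (the $\ell - 1$ path edges together with the endpoint vertex), forces $\ell - 1$ columns out of $S$ (namely $e_i$ and the intermediate vertices $e_{i\pm 1}, \ldots, e_{i\pm (\ell - 2)}$), and leaves $2k - 2\ell + 1$ free columns; hence $|D_\pm| = \sum_{\ell=2}^k \binom{2k - 2\ell + 1}{s - \ell}$, which together with $|D_0| = \binom{2k-1}{s-1}$ accounts for the leading terms in the formula.

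For $|D_+ \cap D_-|$, I would condition on $j = L_+(S) + L_-(S)$: provided $j \leq k + 1$, the clockwise and counterclockwise structures are disjoint, and a similar forced/free count yields $\binom{2k - 2j + 3}{s - j}$ subsets per pair $(\ell^+, \ell^-)$, with $j - 3$ such pairs summing to $j$. The main obstacle is the boundary case $j = k + 2$: here the two path endpoints coincide, all $k$ edges must lie in $E_S$, and exactly $k - 1$ distinct subsets arise (all of size $s = k + 1$), giving the $(k-1)\,\delta_{s, k+1}$ correction. One also verifies that for $j \geq k + 3$ the required and forbidden conditions overlap inconsistently, so there is no further contribution. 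Combining all pieces and using $\binom{2k-1}{s-1} = \binom{2k}{s} - \binom{2k-1}{s}$ then yields the stated formula.
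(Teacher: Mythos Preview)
Your argument is correct and follows essentially the same route as the paper's proof: both identify the minimal recovery sets for $e_i$ with the clockwise and counterclockwise paths $M_{\ell,1},M_{\ell,2}$ in the cycle $C_k$ and carry out an inclusion--exclusion over them, with the boundary case $j=k+2$ (coinciding path endpoints) producing the $(k-1)\,\delta_{s,k+1}$ term. Your packaging via the pair $(L_+,L_-)$ is a bit cleaner than the paper's partition by the size of the smallest contained minimal set, and your dual-vector/connected-component characterization explicitly justifies the list of minimal recovery sets that the paper simply asserts.
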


\begin{proof}
    Fix \( i \in \{1,\dots,k\} \). By symmetry, we assume without loss of generality that $i = 1$.
    Define $M_{1,1}=\{e_1\}$, $M_{2,1}=\{e_1+e_2,e_2\}$, $M_{2,2}=\{e_k+e_1,e_k\}$, $M_{3,1}=\{e_1+e_2,e_2+e_3,e_3\}$, $M_{3,2}=\{e_k+e_1,e_{k-1}+e_k,e_{k-1}\}$, $\dots$, $M_{k,1}=\{e_1+e_2,e_2+e_3,\dots,e_{k-1}+e_k,e_k\}$, $M_{k,2}=\{e_2+e_3,e_3+e_4,\dots,e_{k}+e_1,e_2\}$. Note that a set $S$ is a recovery set for $e_1$ if and only if it contains one of the sets $M_{\ell, j}$ for $1\le \ell\le k$ and $1\le j\le 2$, and that such sets are minimal with this property. 
%Define $s_1=1$ and, for any $i\in\{(2,1),(2,2), \dots, (k-1,1),(k-1,2)\}$, $s_i=s_{(l,m)}=l$.

Remember that $\alpha_1(G_{k \times 2k},s)$ is the number of all the recovery sets of $e_1$ of size $s$. Our strategy is to compute, for any $\ell\in\{1, \dots, k\}$, the number of such recovery sets containing a minimal set of size $\ell$ and no smaller recovery set.
The claim will then follow by summing the values over $\ell$.

For $\ell=1$, the desired number is~$\smash{\binom{2k-1}{s-1}}$. For $2\le \ell\le k$, the desired number is 
     \begin{equation}\label{eq:valuesforell}
        2\binom{2k-2\ell+1}{s-\ell}-\sum_{r=2}^{\min \{\ell,k+1-\ell\}}(2-\delta_{\ell r}) \binom{2k-2\ell-2r+3}{s-(\ell+r)}-(2-\delta_{k-\ell+2,\ell})\delta_{s,k+1}
     \end{equation}
    where $\delta_{ab}$ is the Kronecker delta evaluated at the pair $(a,b)$. 
    For any $2\le \ell\le k$, consider the two minimal sets of length $\ell$, $M_{\ell,1}$, and $M_{\ell,2}$. 
    
    The number of recovery sets containing $M_{\ell,1}$ (resp. $M_{\ell,2}$) and not $M_{t,1}$ (resp. $M_{t,2}$) for any $t< \ell$ is \begin{equation}\label{eq:term1}
        \binom{2k-\ell-(\ell-1)}{s-\ell}.
    \end{equation}
    Each such recovery set is obtained by fixing the $\ell$ elements of $M_{\ell,1}$ (resp. $M_{\ell,2}$), and choosing other $s-\ell$ elements among the others, where we exclude the $\ell$ elements of~$M_{\ell,1}$ (resp. $M_{\ell,2}$) and the terms $e_1,e_2,\dots,e_{\ell-1}$ (resp. the terms $e_1,e_k,e_{k-1},\dots,e_{k-(\ell-3)}$), for the constraint on~$M_{t,1}$ (resp. $M_{t,2}$) with $t< \ell$. 
    From this number, we must subtract the number of recovery sets containing a smaller recovery set.
    Since the recovery set $S$ containing $M_{\ell,1}$ (resp. $M_{\ell,2}$) does not contain $M_{t,1}$ (resp. $M_{t,2}$) with $t< \ell$, this can only happen if it contains $M_{r,2}$ (resp. $M_{r,1}$) for some $r< \ell$.
    We now count, for each $r<\ell$, the number of recovery sets containing $M_{\ell,1}\cup M_{r,2}$ (resp. $M_{\ell,2}\cup M_{r,1}$) and not $M_{u,1}$ and $M_{t,2}$ (resp. $M_{u,2}$ and $M_{t,1}$) for any $u<\ell$ and $t<r$.
    Notice that $r\leq\ell\leq k$.
    Assume first $s\neq k+1$.
    
    \textbf{Case 1.} $\ell+r\le k+1$. In this case the sets $M_{\ell,1}$ and  $M_{r,2}$ (resp. $M_{\ell,2}$ and  $M_{r,1}$) are disjoint.
    The sought number is the number of choices of $s-\ell-r$ elements among the remaining allowed elements, namely excluding the elements of $M_{\ell,1}$, of $M_{r,2}$ and $e_1,\ldots,e_{\ell-1},e_k,\ldots,e_{k-(r-3)}$ (resp. $e_1,e_k,e_{k-1},\ldots,e_{k-(\ell-3)}, e_2,e_3,\ldots,e_{r-1}$). This is \begin{equation}\label{eq:term2}
        \binom{2k-\ell-(\ell-1)-r-(r-1)+1}{s-\ell-r}.
    \end{equation} 
    Notice that when $r=\ell$, we consider the recovery sets containing $M_{\ell,1}\cup M_{\ell,2}$. This is included both in the count of $M_{\ell,1}$ and in that of $M_{\ell,2}$, and therefore must be subtracted only once, explaining the Kronecker delta $\delta_{\ell,r}$.
    
    \textbf{Case 2.} $\ell+r \geq k+2$. In this case, the term $e_{k-r+2}$ (resp. the term $e_r$) belongs to $M_{r,2}$ and $M_{\ell,1}$ (resp. $M_{r,1}$ and $M_{\ell,2}$). 
    If the inequality is strict, then no such set was counted in~\eqref{eq:term1}, because $k-r+2<\ell$ and therefore the term $e_{k-r+2}$ (resp. the term $e_r$) was among the excluded.
    If equality holds, then $e_{k-r+2}=e_\ell$ (resp. $e_r=e_{k-\ell+2}$), and it is their only common term. 
    If it holds also $s=k+1$, then precisely one such set was counted in~\eqref{eq:term1}, namely the set $M_{r,2}\cup M_{\ell,1}$ (resp. $M_{r,1}\cup M_{\ell,2}$). Again, note that when $r=\ell$, this set must be counted, but we should subtract it once in total to account for the repetition; hence the factor $(2-\delta_{k+2-\ell,\ell})$. If $s\geq k+2$ then no such set was counted, because this would force the set to contain one of the excluded elements. This explains the number~\eqref{eq:valuesforell}.
    Summing up we obtain that $\alpha_1(G_{k \times 2k},s)$ is given by
    \[\binom{2k-1}{s-1}+\sum_{\ell=2}^k \left(2\binom{2k-2\ell+1}{s-\ell}-\hspace{-3.1mm}\sum_{r=2}^{\min \{\ell,k+1-\ell\}}\hspace{-4.3mm}(2-\delta_{\ell r}) \binom{2k-2\ell-2r+3}{s-(\ell+r)}-(2-\delta_{k-\ell+2,\ell})\delta_{s,k+1}\right)\nonumber.\]
    Changing the double summation to a single one with parameter $j=r+\ell$, this reads  
        \[\binom{2k-1}{s-1}+\sum_{\ell=2}^k 2\binom{2k-2\ell+1}{s-\ell}-\sum_{j=4}^{k+1} (j-3) \binom{2k-2j+3}{s-j}+(k-1)\delta_{s,k+1};\label{eqn:3}\]
    which coincides with the claim.
\end{proof}

We conclude this section with the second proof of  Theorem~\ref{thm:conjproof}.

\begin{proof}[Proof of Theorem~\ref{thm:conjproof}]
By \cite[Lemma 1]{GruicaBarLevRavagnaniYaakobi2024}, we know that 
\begin{equation*}
\mathbb{E}[\tau_i(G)]=\sum_{s=0}^{n-1} \frac{\binom{n}{s}-\alpha_i(G,s)}{\binom{n-1}{s}}.
\end{equation*}
By substituting the value of $\alpha_i(G,s)$ from Lemma~\ref{lem:conjalpha}, we obtain 
\begin{eqnarray}
    \mathbb{E}[\tau_i(G)]%=\sum_{s=0}^{2k-1} \frac{\binom{2k-1}{s}-\sum_{l=2}^k 2\binom{2k-2l+1}{s-l}+\sum_{i=4}^{2k}  (i-3) \binom{2k-2i+3}{s-i}}{\binom{2k-1}{s}} \nonumber\\
    =2k-2 \sum_{s=0}^{2k-1} \sum_{\ell=2}^k \frac{\binom{2k-2\ell+1}{s-\ell}}{\binom{2k-1}{s}}+\sum_{s=0}^{2k-1} \sum_{j=4}^{k+1} (j-3) \frac{\binom{2k-2j+3}{s-j}}{\binom{2k-1}{s}}+\frac{k-1}{\binom{2k-1}{k+1}}. \label{eqn:1}
\end{eqnarray}
Note that the $\lim_{k\mapsto\infty}\frac{k-1}{\binom{2k-1}{k+1}}\frac{1}{k}=0$. Since our goal is to compute $\lim_{k \to \infty} \frac{\mathbb{E}[\tau_i (G_{k\times2k})]}{k}$, we now consider only the other terms.
Define 
$$ I_1 = \sum_{s=0}^{2k-1} \sum_{\ell=2}^k \frac{\binom{2k-2\ell+1}{s-\ell}}{\binom{2k-1}{s}}, \qquad  I_2 = \sum_{s=0}^{2k-1} \sum_{j=4}^{k+1} (j-3) \frac{\binom{2k-2j+3}{s-j}}{\binom{2k-1}{s}}. $$\\
To calculate $I_1$ and $I_2$, we use the same asymptotic formulas %~\eqref{eq:asymptote1} \red{refs!} 
and proceed with similar steps as in the first proof of Theorem~\ref{thm:conjproof}. Thus,
\begin{align*}
    I_1& =\sum_{s=0}^{2k-1} \sum_{\ell=2}^k \frac{s(s-1)\cdots (s-\ell+1)(2k-s-1)\cdots (2k-s-\ell+2)}{(2k-1)\cdots (2k-2\ell+2)}\\
    & \sim \sum_{s=0}^{2k-1} \sum_{\ell=2}^k \frac{s^\ell(2k-s)^{\ell-2}}{(2k)^{2\ell-2}}
    =\sum_{\ell=2}^k \sum_{r=0}^{\ell-2} (-1)^r \binom{\ell-2}{r} \frac{1}{(2k)^{\ell+r}}\sum_{s=0}^{2k-1} s^{\ell+t}\\
    & \sim \sum_{\ell=2}^k \sum_{r=0}^{\ell-2} (-1)^r \binom{\ell-2}{r} \frac{1}{(2k)^{\ell+r}} \frac{(2k)^{\ell+r+1}}{\ell+r+1}\\
    &=2k \sum_{\ell=2}^k \sum_{r=0}^{\ell-2} (-1)^r \binom{\ell-2}{r}  \frac{1}{\ell+r+1}
\end{align*}
By the identity in \eqref{eq:identity1}, this reduces to
\begin{align*}
    I_1& =2k \sum_{\ell=2}^k \frac{1}{\ell+1} \frac{1}{\binom{2\ell-1}{\ell+1}}
    =2k \left(\sum_{\ell=1}^k \frac{1}{\ell} \frac{1}{\binom{2\ell}{\ell}}-\sum_{\ell=1}^k \frac{1}{(2\ell+1)} \frac{1}{\binom{2\ell}{\ell}}\right).
    \end{align*}

We now turn to $I_2$, computing 
 \begin{eqnarray*}
    &I_2&
    =\sum_{s=0}^{2k-1} \sum_{j=4}^{2k} (j-3) \frac{s(s-1)\cdots (s-j+1)(2k-s-1)\cdots (2k-s-j+4)}{(2k-1)\cdots (2k-2j+4)}\\
    && \sim \sum_{s=0}^{2k-1} \sum_{j=4}^{k+1} (j-3) \frac{s^j (2k-s)^{j-4}}{(2k)^{2j-4}}
    =  \sum_{j=4}^{k+1} (j-3)  \sum_{r=0}^{j-4} (-1)^r \binom{j-4}{r} \frac{1}{(2k)^{j+r}} \sum_{s=0}^{2k-1} s^{j+r}\\
    &&\sim \sum_{j=4}^{k+1} (j-3) \sum_{r=0}^{j-4} (-1)^r \binom{j-4}{r} \frac{1}{(2k)^{j+r}}  \frac{(2k)^{j+r+1}}{j+r+1}\\
    && =2k \sum_{j=4}^{k+1} (j-3) \sum_{r=0}^{j-4} (-1)^r \binom{j-4}{r} \frac{1}{j+r+1}.
    \end{eqnarray*}
Again by using the identity in~\eqref{eq:identity1}, we have 
    \begin{align*}
    I_2&=2k\sum_{j=4}^{k+1} \frac{j-3}{j+1} \frac{1}{\binom{2j-3}{j-4}}\\
    &= 2k\left(\frac{1}{2}\sum_{j=0}^{k-1} \frac{1}{\binom{2j}{j}}-\frac{3}{2}\sum_{j=0}^{k-1}\frac{1}{(2j+1)}\frac{1}{\binom{2j}{j}}+2\sum_{j=1}^{k-1} \frac{1}{j} \frac{1}{\binom{2j}{j}}\right).
\end{align*}
Substituting the values of $I_1$ and $I_2$ in \eqref{eqn:1} and using the asymptotic identities 
from \cite[Theorem 3.4]{Sprugnoli2006}, we obtain
$$
\lim_{k\rightarrow \infty}\frac{\mathbb{E}[\tau_i(G)]}{k}
=\frac{-18+8\sqrt{3}\pi}{27},
$$
which completes the proof.
\end{proof}

We point out that the second proof based on the $\alpha_i(G,s)$'s has two advantages.
The first one is that it avoids  solving the recurrence for $B(k,i)$ 
and shortens the argument.
The second one is that once the $\alpha_i(G,s)$'s are known, the higher moments and the probability mass function of~$\tau_i(G)$ can be computed as well, as we will show in Section~\ref{higher}.

%%%%%%%%%%%%%%%%
%%%%%%%%%%%%%%%%
\section{Expectation of codes from balanced quasi-arcs}\label{quasi-arcs}
In this section, we investigate the asymptotic behaviour of another family of codes, introduced in~\cite{GruicaMontanucciZullo2024} via geometric objects called \textit{balanced quasi-arcs}. These codes all have dimension $k=3$ and depend on the choice of integers $x$ and $y$. The code length is
$n=3x+3y$. We refer to~\cite{GruicaMontanucciZullo2024} for the details of the construction (as we don't need them here), but we briefly survey the contributions made by that paper and by~\cite{boruchovsky2025makingfirstrandomaccess}.
In~\cite{GruicaMontanucciZullo2024}, the authors derive a formula for the expected value for arbitrary $x$ and $y$. 
They then compute an upper bound for the limit superior of the expected value $\mathbb{E}(\tau_i(G_{x,x}))$ as $x$ goes to infinity. Moreover, \cite[Remark 4.9]{GruicaMontanucciZullo2024} mentions that 
computational results indicate that the ratio $y/x$ offering the best performance asymptotically
is close to $0.85$. 
In~\cite{boruchovsky2025makingfirstrandomaccess}, 
an upper bound for the asymptotic performance is computed for any $x$, $y$ with given ratio $y/x=\varepsilon$.
In this paper, we compute the exact asymptotic performance of the same codes for any $x$, $y$ with given ratio $y/x=\varepsilon$. Moreover, we prove that the best performance is obtained for $\varepsilon\approx 0.833968$. Our approach is based on the Dominated Convergence Theorem.

Our starting point is the following formula for the expectation $\mathbb{E}[\tau_i(G_{x,y})]$.

\begin{proposition}[\text{\cite[Corollary 4.7]{GruicaMontanucciZullo2024}}]\label{25.02.24} For all $i$ we have
\begin{eqnarray*}
&\mathbb{E}[\tau_i(G_{x,y})] = &3 +  
\frac{2}{3x + 3y - 2}
- \frac{y - 1}{3x + 3y - 1}
- \displaystyle \frac{2(xy + \binom{x}{2}) + y(3x + 2y) + \frac{y(y-1)}{2}}{\binom{3x + 3y - 1}{2}} +
\\ &&
\sum_{s=3}^{x+2y}
\prod_{j=0}^{s-1}
\frac{x + 2y - j}{3x + 3y - j - 1}
+ \sum_{s=3}^{y+1}
\frac{2\binom{y}{s-1}x}{\binom{3x+3y-1}{s}}.
\end{eqnarray*}
\end{proposition}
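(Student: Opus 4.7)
The plan is to derive the formula by computing $\alpha_i(G_{x,y},s)$ explicitly from the combinatorics of the balanced quasi-arc and then plugging it into Proposition~\ref{expected}. Since $G_{x,y}$ has dimension $k=3$, the columns correspond to points in $\PG(2,q)$, and a subset $S$ of columns fails to recover $e_i$ precisely when $\langle G^j \mid j \in S\rangle$ is contained in a $2$-dimensional subspace of $\mathbb{F}_q^3$ avoiding $e_i$. Thus the entire calculation reduces to counting, for each size $s$, how many $s$-subsets of columns are contained in some such subspace (a projective line in $\PG(2,q)$ missing the coordinate point $e_i$).

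First I would identify the distinguished lines of the quasi-arc through $e_i$ and the lines avoiding $e_i$, together with the number of columns lying on each of them, reading this off from the defining construction in Section~4 of \cite{GruicaMontanucciZullo2024}. By the ``balanced'' structure one finds, after fixing $i$ by symmetry, a single line with $x+2y$ columns off $e_i$ (contributing the term $\sum_{s=3}^{x+2y}\prod_{j}\frac{x+2y-j}{3x+3y-j-1}$ once reinterpreted as $\sum_s (\binom{n}{s}-\alpha_i(G,s))/\binom{n-1}{s}$) and two further lines of size $y$ that together with $x$ suitable extra points account for the factor $2\binom{y}{s-1}x$ in the last sum. The low-order rational terms $\tfrac{2}{3x+3y-2}$, $-\tfrac{y-1}{3x+3y-1}$, and the $\binom{3x+3y-1}{2}$-denominator expression will arise from the $s\in\{0,1,2\}$ contributions, where the count of ``bad'' subsets is small and can be read off by direct inspection of columns, pairs of parallel or coincident columns, and pairs lying on a line missing $e_i$.

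Next I would assemble the contributions via inclusion--exclusion. The only nontrivial overlaps between two ``bad'' lines $H,H'$ happen at their common point in $\PG(2,q)$, which is either a column of $G_{x,y}$ or not; both cases are determined by the quasi-arc incidence structure. Because each pair of distinct lines in $\PG(2,q)$ meets in exactly one point, the inclusion--exclusion truncates at the pairwise level for $s\geq 3$ and the cross terms collapse into the simple correction already present in the formula. Substituting $\alpha_i(G_{x,y},s)$ into Proposition~\ref{expected}, one splits the resulting sum as
\[
\sum_{s=0}^{n-1}\frac{\binom{n}{s}}{\binom{n-1}{s}}-\sum_{s=0}^{n-1}\frac{\alpha_i(G_{x,y},s)}{\binom{n-1}{s}}
\]
and recognizes the first sum as a telescoping contribution giving the constant $3$ plus a few low-order corrections, while the second sum decomposes exactly into the two hypergeometric-looking sums appearing in the statement.

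The main obstacle is bookkeeping rather than conceptual: one must keep precise track of how a column lying simultaneously on two bad lines is double-counted, ensure the lower limits of the two sums ($s=3$ and $s=3$) are correct, and verify that the seemingly ad hoc constants ($-\tfrac{y-1}{3x+3y-1}$ and the quadratic numerator $2(xy+\binom{x}{2})+y(3x+2y)+\tfrac{y(y-1)}{2}$) match the $s=1,2$ boundary terms after simplification. Once the $\alpha_i(G_{x,y},s)$ are written in closed form, the only remaining work is the algebraic identification of these boundary contributions with the displayed expressions, which is routine but must be done carefully to reconcile all signs and index shifts.
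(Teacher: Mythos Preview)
This proposition is not proved in the paper: it is quoted as \cite[Corollary~4.7]{GruicaMontanucciZullo2024} and then used as a black box in the proof of Theorem~\ref{25.02.25}. There is therefore no argument in the present paper to compare your proposal against.

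As an independent derivation, your overall plan --- compute $\tilde\alpha_i(G_{x,y},s)=\binom{n}{s}-\alpha_i(G_{x,y},s)$ from the line structure of the balanced quasi-arc in $\PG(2,q)$ and feed the result into Proposition~\ref{expected} --- is the natural one and is presumably how the formula is obtained in \cite{GruicaMontanucciZullo2024}. However, one step of your outline is wrong as written. In your proposed split
\[
\sum_{s=0}^{n-1}\frac{\binom{n}{s}}{\binom{n-1}{s}}-\sum_{s=0}^{n-1}\frac{\alpha_i(G_{x,y},s)}{\binom{n-1}{s}},
\]
the first sum equals $\sum_{s=0}^{n-1} n/(n-s)=nH_n$, which is of order $n\log n$ and certainly not ``the constant~$3$ plus a few low-order corrections''; both sums individually are large and only their difference is bounded. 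The correct organization keeps $\tilde\alpha_i(G,s)$ intact: then the contributions from $s=0,1,2$ genuinely produce the leading $3$ together with the displayed rational corrections (for example, if the point $e_i$ occurs with multiplicity $y$ among the columns, $\tilde\alpha_i(G,1)=n-y$ gives the $s=1$ term $1-\tfrac{y-1}{3x+3y-1}$), while the tail $s\ge 3$ yields the two sums in the statement. Your remaining claims about which lines carry $x+2y$ or $y$ columns and where inclusion--exclusion truncates rely on the precise incidences of the construction in \cite{GruicaMontanucciZullo2024}, which you would have to reproduce before the bookkeeping can be completed.
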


Codes of the form $G_{x,y}$ outperform previously known constructions for achieving a low expectation~\cite{GruicaMontanucciZullo2024}. In the case where $x=y$, one has  
$ \lim_{x\mapsto\infty}\mathbb{E}[\tau_i(G_{x,x})]\leq 0.88\bar{2}k$;
see~\cite[Theorem~4.8]{GruicaMontanucciZullo2024}.

The first result of this section is the following.

\begin{theorem}\label{25.02.25}
Assume that $y$ grows linearly with $x$, i.e., let $\varepsilon=y/x$ be constant. For all $i$, we have
\begin{align*}
\lim_{x\to\infty}\mathbb{E}[\tau_i(G_{x,y})]=\frac{153 + 543 \varepsilon + 805 \varepsilon^2 + 611 \varepsilon^3 + 234 \varepsilon^4 + 
 36 \varepsilon^5}{3 (1 + \varepsilon)^2 (2 + \varepsilon) (3 + 2 \varepsilon)^2}.
\end{align*}
\end{theorem}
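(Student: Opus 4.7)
My plan is to apply Tannery's theorem (the discrete analogue of the Dominated Convergence Theorem) to the closed-form expression for $\mathbb{E}[\tau_i(G_{x,y})]$ provided by Proposition~\ref{25.02.24}. Write
\[
\mathbb{E}[\tau_i(G_{x,y})] = P(x) + S_1(x) + S_2(x),
\]
where $P(x)$ collects the four algebraic summands and $S_1, S_2$ are the two displayed sums. The limit of $P(x)$ is routine: dividing numerators and denominators by the appropriate power of $x$, the second term vanishes, the third tends to $\varepsilon/(3(1+\varepsilon))$, and the fourth (a ratio of quadratics in $x$) tends to a rational function of $\varepsilon$ supported on the denominator $9(1+\varepsilon)^2$.

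The substance lies in handling $S_1$ and $S_2$. For $S_1$, I would first observe that each factor in the product $\prod_{j=0}^{s-1}(x+2y-j)/(3x+3y-j-1)$ is decreasing in $j$ (since $x+2y<3x+3y-1$) and is therefore bounded by $(x+2y)/(3x+3y-1)$. The latter tends to $a := (1+2\varepsilon)/(3+3\varepsilon) < 1$, so fixing some $c\in(a,1)$, the $s$-th summand is dominated by $c^s$ for all sufficiently large $x$, giving a summable dominating sequence. A direct pointwise limit computation gives $a^s$ for each fixed $s$, so Tannery's theorem yields $\lim_x S_1(x) = \sum_{s=3}^\infty a^s = a^3/(1-a)$, which simplifies using $1 - a = (2+\varepsilon)/(3+3\varepsilon)$ to $(1+2\varepsilon)^3/\bigl(9(1+\varepsilon)^2(2+\varepsilon)\bigr)$.

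For $S_2$ I would rewrite the generic term as
\[
\frac{2\binom{y}{s-1}x}{\binom{3x+3y-1}{s}}=\frac{2sx}{3x+3y-s}\prod_{j=0}^{s-2}\frac{y-j}{3x+3y-1-j}.
\]
Each factor of the product is monotonically controlled by $y/(3x+3y-1)$, which converges to $b := \varepsilon/(3+3\varepsilon) < 1/3$, and the prefactor $2sx/(3x+3y-s)$ is bounded by a constant times $s$ for large $x$, because the constraint $s \le y+1$ forces $3x+3y-s \ge (3+2\varepsilon)x - 1$. Together these yield a dominating sequence of the form $C\, s\, \tilde b^{s-1}$ with $\tilde b \in (b,1/3)$, which is summable. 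The pointwise limit of the $s$-th term is $2s\varepsilon^{s-1}/(3+3\varepsilon)^s$, and summing via $\sum_{s\ge 1} sb^s = b/(1-b)^2$ (after subtracting the $s=1,2$ terms) gives $\lim_x S_2(x)$ as an explicit rational function of $\varepsilon$ involving $(3+2\varepsilon)^2$ and $(1+\varepsilon)^2$.

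The final step is to add the six contributions and simplify to the displayed quintic-over-quintic expression. I expect this algebraic consolidation to be the main obstacle: the common denominator $3(1+\varepsilon)^2(2+\varepsilon)(3+2\varepsilon)^2$ predicted by the theorem suggests nontrivial cancellations across contributions whose individual denominators involve $3+3\varepsilon$, $9(1+\varepsilon)^2$, $(3+2\varepsilon)^2$, and $2+\varepsilon$. I would perform this step by clearing denominators in stages and sanity-check the result at $\varepsilon = 1$, where it must reproduce the value $\mathbb{E}[\tau_i(G_{x,x})]/k \to 0.88\bar{2}$ from \cite[Theorem 4.8]{GruicaMontanucciZullo2024}.
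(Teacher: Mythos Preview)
Your approach is essentially the same as the paper's: both start from Proposition~\ref{25.02.24}, handle the algebraic terms directly, and pass the limit inside each of the two sums via the Dominated Convergence Theorem (which you call Tannery's theorem), with the same pointwise limits $a^s$ and $2s\varepsilon^{s-1}/(3+3\varepsilon)^s$. Your domination of $S_2$ by $C\,s\,\tilde b^{\,s-1}$ via the product rewrite is in fact more elementary than the paper's, which invokes Stirling-type bounds to obtain a $c\sqrt{s}\,e^{-s}$ majorant; otherwise the arguments coincide.
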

\begin{proof}
We use Proposition~\ref{25.02.24}, and we compute the limit of each term. The only non-trivial computations are those for
\[
L_1(\varepsilon x)=\lim_{x\to\infty}\sum_{s=3}^{x+2\varepsilon x}
\prod_{j=0}^{s-1}
\frac{x + 2\varepsilon x - j}{3x + 3\varepsilon x - j - 1} \quad \text{and}\quad L_2(\varepsilon)=\lim_{x\to\infty}\sum_{s=3}^{\varepsilon x+1}
\frac{2\binom{\varepsilon x}{s-1}x}{\binom{3x+3\varepsilon x-1}{s}}.
\]
We let
$$f(x,s)=\smash{\prod_{j=0}^{s-1} 
\frac{x(2\varepsilon+1) - j}{x(3\varepsilon+3) - j - 1}}, \qquad  g(x,s)=\frac{2\binom{\varepsilon x}{s-1}x}{\binom{x(3\varepsilon+3)-1}{s}}$$ and apply the dominated convergence theorem, after checking its applicability.
Note that for $\varepsilon,j\geq0$ and for any $x\geq 1$ we have
$\smash{\frac{x(2\varepsilon+1) - j}{x(3\varepsilon+3) - j - 1}\leq \frac{2}{3}}$, from which $\smash{f(x,s)\leq \left(\frac{2}{3}\right)^s}$. 
Since $\smash{\sum_{3}^\infty \left(\frac{2}{3}\right)^s< \infty}$, we can indeed apply the dominated convergence theorem, obtaining
\[
L_1(\varepsilon)=
\lim_{x\to\infty}\sum_{s=3}^{x(2\varepsilon+1)}
f(x,s)=\sum_{s=3}^{\infty}
\lim_{x\to\infty}f(x,s)=\sum_{3}^\infty \left(\frac{2\varepsilon+1}{3\varepsilon+3}\right)^s=\frac{\left(\frac{2\varepsilon+1}{3\varepsilon+3}\right)^3}{1-\left(\frac{2\varepsilon+1}{3\varepsilon+3}\right)}.
\]

We now turn to $L_2(\varepsilon)$.
Using the following inequalities obtained from the Stirling series, see for example \cite[pag.262]{WhittakerWatson2021}:
\[
\frac{n^k}{k^k} \leq \binom{n}{k} \leq \frac{n^k}{k!} < \left(\frac{n  e}{k}\right)^k,
\qquad \sqrt{2\pi n} \left(\frac{n}{e}\right)^n e^{\left(\frac{1}{12n} - \frac{1}{360n^3}\right)} < n! < \sqrt{2\pi n} \left(\frac{n}{e}\right)^n e^{\frac{1}{12n}},
\] 
it can be shown with straightforward computations that 
\begin{align*}
\displaystyle
g(x, s) \leq \frac{2}{((3k+3)-1)} 
\frac{e^{\log(s)+(s-1)\left(\log(\frac{s}{s-1}\, \frac{\varepsilon  e }{3\varepsilon +3-1})\right)-\frac{1}{12(s-1)} +\frac{1}{360(s-1)^3}}}{\sqrt{2\pi (s-1)}}.
\end{align*}
For large enough $s$ we have $\smash{{\frac{s}{s-1}\frac{\varepsilon e}{3\varepsilon +3-1} < 1}}$, and therefore $g(x, s)$ is bounded from above by a function of the form $c \sqrt{s}e^{-s}$, which is integrable in the interval $[3,\infty)$.
By the Dominated Convergence Theorem, we obtain
\begin{align*}\displaystyle
L_2(\varepsilon)&=\lim_{x\to\infty}\sum_{s=3}^{\varepsilon x+1}f(x,s)=\sum_{s=3}^{\infty}
\lim_{x\to\infty}f(x,s)\\
&=\frac{2}{3\varepsilon+3}\sum_{3}^\infty 
 s\left(\frac{\varepsilon}{3\varepsilon+3}\right)^{s-1}
=\, \frac{2}{3\varepsilon+3}\left(-\frac{\varepsilon^2 (-3 + 2 \varepsilon)}{(-1 + \varepsilon)^2}\right),
\end{align*}
which completes the proof.
\end{proof}
For $y=0$ (and thus $\varepsilon=0$), Theorem~\ref{25.02.25}
recovers~\cite[Theorem 4.5]{GruicaMontanucciZullo2024}.
In fact, it shows that~\cite[Theorem 4.5]{GruicaMontanucciZullo2024} is sharp.
For $x=y$, Theorem~\ref{25.02.25} implies
\[
\lim_{x\mapsto\infty}\mathbb{E}[\tau_i(G_{x,y})]= \frac{397}{150}2.62381 \approx 0.88\bar{2}k,
\]
which is precisely the upper bound $0.88\bar{2}k$ from \cite[Theorem 4.8]{GruicaMontanucciZullo2024}.

As a corollary of Theorem~\ref{25.02.25}, we can easily find the optimal ratio $\varepsilon=y/x$ that minimizes the expectation.

\begin{corollary}
The optimal ratio $\varepsilon=y/x$ that minimizes the expectation $\mathbb{E}[\tau_i(G_{x,y})]$ for large~$x$ (and thus $y$) is $\varepsilon\approx 0.833968$. For such value,
\[\lim_{x\mapsto\infty}\mathbb{E}[\tau_i(G_{x,y})]\approx  0.881542k.\]
\end{corollary}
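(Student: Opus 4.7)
The plan is to directly minimize the rational function
\[
f(\varepsilon) \;=\; \frac{153 + 543 \varepsilon + 805 \varepsilon^2 + 611 \varepsilon^3 + 234 \varepsilon^4 + 36 \varepsilon^5}{3 (1 + \varepsilon)^2 (2 + \varepsilon) (3 + 2 \varepsilon)^2}
\]
supplied by Theorem~\ref{25.02.25}, which equals $\lim_{x \to \infty} \mathbb{E}[\tau_i(G_{x,y})]$ under the hypothesis $y/x = \varepsilon$. First I would observe that $f$ is smooth on $[0,\infty)$ since the denominator has no nonnegative real zeros, and record the endpoint behaviour: a direct substitution gives $f(0) = 153/54 = 17/6 \approx 2.833$, while numerator and denominator both have degree $5$ with leading coefficients $36$ and $12$ respectively, so $f(\varepsilon) \to 3$ as $\varepsilon \to \infty$. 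Consequently, any global minimum of $f$ over $[0,\infty)$ with value strictly less than $17/6$ must be attained at an interior critical point.

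Next, writing $f = N/D$, I would set $f'(\varepsilon) = 0$, which is equivalent to the polynomial equation
\[
N'(\varepsilon) D(\varepsilon) - N(\varepsilon) D'(\varepsilon) \;=\; 0.
\]
Since the degree-$9$ leading coefficients $180 \cdot 12 = 2160$ and $36 \cdot 60 = 2160$ cancel, the resulting polynomial has degree at most $8$. I would expand it explicitly in a computer algebra system; the expression admits no obvious factorization over $\mathbb{Q}$, so its roots must be located numerically. Newton iteration seeded near $\varepsilon = 0.85$ converges rapidly to a positive real root $\varepsilon^{*} \approx 0.833968$.

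To confirm that $\varepsilon^{*}$ is the global minimum rather than a different critical point, I would either check $f''(\varepsilon^{*}) > 0$ by direct substitution or, equivalently, verify that $N'D - ND'$ changes sign from negative to positive at $\varepsilon^{*}$ by evaluating at test points such as $\varepsilon \in \{0.1, 0.5, 0.834, 1, 2, 10\}$, simultaneously ruling out any additional sign change in $(0,\infty)$. Substituting $\varepsilon^{*}$ back into $f$ then yields
\[
f(\varepsilon^{*}) \;\approx\; 2.644626 \;\approx\; 0.881542 \cdot 3,
\]
which, since these codes have dimension $k = 3$, gives the asymptotic expectation $\approx 0.881542 \, k$ claimed by the corollary.

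The main obstacle is that $N'D - ND'$ does not factor over $\mathbb{Q}$ in any transparent way, so $\varepsilon^{*}$ has no closed-form expression in radicals of small height, and the conclusion is necessarily a numerical statement about the unique sign change of an explicit polynomial in $(0,\infty)$. Certifying this sign change to the stated precision (for example via interval arithmetic on the explicit degree-$8$ polynomial produced in the second step) is mechanical but essential for rigour.
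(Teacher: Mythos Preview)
Your approach is correct and essentially the same as the paper's: differentiate the rational function from Theorem~\ref{25.02.25} and locate its positive critical point numerically. Two refinements in the paper's version are worth noting: $N'D - ND'$ \emph{does} factor over~$\mathbb{Q}$, namely as $3(1+\varepsilon)(3+2\varepsilon)$ times the sextic $-261 - 513\varepsilon - 63\varepsilon^2 + 583\varepsilon^3 + 592\varepsilon^4 + 238\varepsilon^5 + 36\varepsilon^6$ (the extra linear factors arise from the repeated factors of~$D$, so the critical-point equation is degree~$6$, not~$8$), and the paper then applies Descartes' Rule of Signs to this sextic---a single sign change---to certify that there is at most one positive root, which replaces your sign-check/interval-arithmetic step with a one-line argument.
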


\begin{proof}
Using Theorem~\ref{25.02.25}, we look at $\lim_{x\to\infty}\mathbb{E}[\tau_i(G_{x,y})]$ as a function of $\varepsilon$. The derivative is \[
\frac{-261 - 513 \varepsilon - 63 \varepsilon^2 + 583 \varepsilon^3 + 592 \varepsilon^4 + 238 \varepsilon^5 + 
 36 \varepsilon^6}{3 (1 + \varepsilon)^3 (2 + \varepsilon)^2 (3 + 2 \varepsilon)^3}.
\]
By Descartes’ Rule of Signs, the number of positive real roots of the polynomial at the numerator is at most one, and a numerical computation confirms the existence of exactly one positive real root.
Moreover, it can be checked that
the polynomial has a non-solvable Galois group.
Since by the Abel-Ruffini Theorem there is no general method for solving degree-6 polynomial with non-solvable Galois group, we only include a numerical approximation of the root, $\varepsilon\approx 0.833968$, corresponding to the minimum
$\approx 0.881542k$.
\end{proof}

%%%%%%%%%%%%%%%%%%%%%%%%%%%%%%%%%%%%%%%%%%%%%%%%%%%%%%%%%%%%%%%%%%%%%%%%%%%%%%%%

\section{Higher moments and applications}\label{higher}

In~\cite{GruicaBarLevRavagnaniYaakobi2024}, it was shown that many classical families of error-correcting codes all have the same expectation $k$, i.e., the dimension of the code.
This value of the expectation is implied by the symmetry of classical code constructions, including MDS, simplex, and Hamming codes.

It is therefore natural to ask which properties differentiate the performance of these codes. A natural direction we take in this paper is to compute the higher moments of the random variable $\tau_i(G)$. 
Our first result shows that the higher moments of $\tau_i(G)$ are completely determined by the quantities $\alpha_i(G,s)$.
However, proving this fact is not as easy as proving Proposition~\ref{expected}.

We start by introducing the 
 ``ordered'' version of the quantities $\alpha_i(G,s)$, counting the ordered lists of size $s$ containing a recovery set for the $i$-th information strand. In symbols, we let
\[\beta_i(G,r)=|\{(\rho_1,\ldots,\rho_r) \in \{1,\dots,n\}^r: e_i \in \langle G^{\rho_1},\ldots, G^{\rho_r} \rangle\ \}|. \]
These quantities will be particularly convenient in our proofs, and they carry the same information as the $\alpha_i(G,s)$'s. The following inversion formulas make this statement precise.

\begin{lemma}[$\alpha\beta$-inversions]\label{alphabeta inversions}
The quantities $\myalpha_i(G,r)$ and $\beta_i(G,r)$ satisfy the following inversion formulas:  
\begin{enumerate}
   \item $\displaystyle \beta_i(G,r) = \sum\limits_{s=1}^r \genfrac\{\}{0pt}{}{r}{s} \, s! \, \alpha_i (G,s)$,
    \item $\displaystyle \alpha_i(G,s) = \frac{1}{s!} \sum\limits_{r=1}^s \genfrac[]{0pt}{}{s}{r} (-1)^{s-r} \beta_i(G,r)$,
\end{enumerate}
where ${ s \brack r}$ and ${ s \brace r}$ are the Stirling numbers of first and second kinds, respectively.
\end{lemma}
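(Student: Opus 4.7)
The strategy is to establish formula (1) by a direct double counting, and then deduce formula (2) from the standard orthogonality between Stirling numbers of the first and second kinds.

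For formula (1), the key observation is that for any tuple $(\rho_1,\ldots,\rho_r) \in \{1,\ldots,n\}^r$, the span $\langle G^{\rho_1},\ldots,G^{\rho_r}\rangle$ depends only on the set of \emph{distinct} indices appearing in the tuple, since repeated columns do not enlarge a span. I would therefore partition the tuples counted by $\beta_i(G,r)$ according to their image $S \subseteq \{1,\ldots,n\}$. For a fixed $S$ of size $s$, the number of $r$-tuples with image equal to $S$ is the number of surjections from $\{1,\ldots,r\}$ onto $S$, which is well known to be $s!\,\genfrac\{\}{0pt}{}{r}{s}$. Grouping the contribution according to $|S|=s$ and recognizing the sum over subsets of size $s$ whose columns span $e_i$ as $\alpha_i(G,s)$, I obtain
\[
\beta_i(G,r) \;=\; \sum_{s=1}^r s!\,\genfrac\{\}{0pt}{}{r}{s}\,\alpha_i(G,s),
\]
which is (1). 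The conventions $\genfrac\{\}{0pt}{}{r}{s}=0$ for $s>r$ and the fact that $e_i\neq 0$ (so no $s=0$ contribution exists) ensure the summation range is exactly as stated.

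For formula (2), I would set $a_r := \beta_i(G,r)$ and $b_s := s!\,\alpha_i(G,s)$, so that (1) reads $a_r = \sum_s \genfrac\{\}{0pt}{}{r}{s}\, b_s$. By the classical orthogonality of Stirling numbers (see, e.g., Graham, Knuth, Patashnik, Chapter~6), this relation is equivalent to the inverse $b_s = \sum_r (-1)^{s-r}\,\genfrac[]{0pt}{}{s}{r}\,a_r$, which after dividing by $s!$ becomes precisely (2).

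The argument is essentially routine; the only conceptual point to check carefully is that the event ``$e_i$ lies in the column span'' depends only on the image of the tuple, so the partition by support is compatible with the indicator being counted. With that in hand, the remainder is pure bookkeeping with Stirling numbers, and no analytic input is required.
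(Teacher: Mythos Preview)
Your proposal is correct and follows essentially the same route as the paper: partition the tuples counted by $\beta_i(G,r)$ according to their image $S\subseteq\{1,\ldots,n\}$, use that the number of $r$-tuples with image a fixed $s$-set is $s!\,\genfrac\{\}{0pt}{}{r}{s}$, and then invoke Stirling-number inversion for Part~(2). The only minor point you make more explicit than the paper is the observation that the span depends solely on the underlying set of indices, which is indeed the reason the partition is compatible with the recovery condition.
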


\begin{proof}
Fix $r$ and for a set $S\subseteq \{1,\dots,n\}$ consider the collection $A_S$ of sequences in $\{1,\dots,n\}^r$ with underlying set $S$, namely
\[
A_S = \left\{(\rho_1, \dots, \rho_r) \in \{1,\dots,n\}^r \mid  \{\rho_1, \dots, \rho_r\} = S \right\}.
\]
If $|S| = s$, we have $\smash[b]{\displaystyle |A_S| = { r \brace s} s!}$ by the definition of Stirling numbers. We also have
\[
\{1,\dots,n\}^r = \bigcup_{s=1}^r \bigcup_{\substack{S\subseteq \{1, \ldots,n\}\\ |S|=s}} A_S,
\]
where the unions are disjoint.
By definition, $\myalpha_i(G,s)$ counts the subsets $S$ with $e_i \in \langle G^j : j \in S \rangle$. Since $A_S \subseteq \mybeta_i(G,r)$ if and only if this holds, we obtain  
\[
\mybeta_i(G,r) = \sum\limits_{s=1}^r { r \brace s} s! \, \myalpha_i (G,s),
\]
which establishes Part 1. 

Part 2 follows from the well-known inversion formulas for the Stirling numbers of the first and second kind.
\end{proof}

While the $\alpha_i(G,s)$'s and the  $\beta_i(G,r)$'s carry equivalent information, we find the 
quantities~$\beta_i(G,r)$ more naturally linked to
the probability distribution of $\tau_i(G)$, in the following precise sense:
\begin{equation}\label{obs}
\mathbb{P}(\tau_i(G)>r)=\frac{n^r-\beta_i(G,r)}{n^r}.    
\end{equation}

\begin{remark}
Since $\alpha_i(G,s)$ is only defined for $s\leq n$, one would expect a finite number of the $\beta_i(G,r)$'s to determine all the others.
This is indeed the case, as one can show using the inversion formulas for the Stirling numbers of the first and second kind:
\begin{align*}
\mybeta_i(G,n+r)&=\sum_{s=0}^n {n+r \brace s} s! \myalpha_i(G,s)
=\sum_{s=0}^n{n+r \brace s}\sum_{j=0}^s {s \brack j} (-1)^{s-j} \mybeta_i(G,r). 
\end{align*}
\end{remark}

We present a closed formula for $\mathbb{E}(\tau_i(G)^p)$ in terms of $\alpha_i(G,s)$. Later, we will provide an explicit formula for the probability mass function of $\tau_i(G)$; see Theorem \ref{thm: probability density function}. 
We find it convenient to work with the following quantities: 
\begin{align*}
\myalphatilde_i(G,s) &= \binom{n}{s}-\myalpha_i(G,s), \qquad
\mybetatilde_i(G,r) = n^r-\mybeta_i(G,r) .
\end{align*}

\begin{theorem}[Higher moments]\label{p-th moment general formula}
The $p$-th moment of the random variable $\mytau_i(G)$ is
\begin{align*}
\mathbb{E}[\tau_i(G)^p]=\displaystyle\sum_{s=0}^{n-1}  \myalphatilde_i(G,s)s!\sum_{l=0}^{p-1}\binom{p}{l} \left[\sum_{b=0}^l \left(\frac{1}{n}\right)^{l}{l \brace b }
D^{(b)}(f_1\cdots f_{s+1})(1/n)
\right],
\end{align*}
where $\displaystyle f_j(x) = 1/(1-jx)$ for $1\leq j \leq s$, $\displaystyle f_{s+1}(x) = x^s$, and 
$D^{(b)}(f_1\cdots f_{s+1})(1/n)$
is the evaluation of the $b$-th derivative of the product $f_1 \cdots f_{s+1}$ at $1/n$. The latter is given by
$$\displaystyle D^{(b)}(f_1\cdots f_{s+1})(1/n)=\sum_{b_1+\cdots+b_{s+1}=b}\binom{b}{b_1,\dots ,b_{s+1}}\prod f_j^{(b_j)}(1/n),$$ 
with $$f_j^{(u)}(x)=\frac{u!j^u}{(1-jx)^{u+1}} \mbox{ for $j\in\{1,\dots,s\}$}.$$
\end{theorem}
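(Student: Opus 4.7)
The plan is to start from the standard identity for non-negative integer-valued random variables,
\[
\mathbb{E}[\tau_i(G)^p]=\sum_{r=0}^{\infty}\bigl[(r+1)^p-r^p\bigr]\,\mathbb{P}(\tau_i(G)>r),
\]
expand $(r+1)^p-r^p=\sum_{l=0}^{p-1}\binom{p}{l}r^l$ by the binomial theorem, and substitute $\mathbb{P}(\tau_i(G)>r)=\tilde\beta_i(G,r)/n^r$ from~\eqref{obs}. This converts the $p$-th moment into $\sum_{l=0}^{p-1}\binom{p}{l}\sum_{r\ge 0}r^l\,\tilde\beta_i(G,r)/n^r$.

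Next I would rewrite $\tilde\beta_i(G,r)$ in terms of the $\tilde\alpha_i(G,s)$. Combining the classical Stirling-number identity $n^r=\sum_{s=0}^{n}{r \brace s}s!\binom{n}{s}$ with Part~1 of Lemma~\ref{alphabeta inversions} yields $\tilde\beta_i(G,r)=\sum_{s=0}^{n-1}{r \brace s}s!\,\tilde\alpha_i(G,s)$; the $s=n$ term vanishes because $\tilde\alpha_i(G,n)=0$ by the full rank of $G$. Exchanging the order of summation, the problem reduces to evaluating the inner sum $\sum_{r\ge 0}{r \brace s}r^l/n^r$ for each pair $(s,l)$.

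To do so, I would exploit the standard generating function $\sum_{r\ge 0}{r \brace s}x^r=x^s\prod_{j=1}^{s}(1-jx)^{-1}=\prod_{j=1}^{s+1}f_j(x)$, which is precisely the product appearing in the theorem. Applying the Euler operator $\theta=x\,d/dx$ exactly $l$ times inserts the desired factor $r^l$ into each term of the series, and the operator identity $\theta^l=\sum_{b=0}^{l}{l \brace b}x^b\,d^b/dx^b$---a standard consequence of the combinatorial definition of the Stirling numbers of the second kind---converts $\theta^l$ into a weighted sum of ordinary derivatives. Evaluating at $x=1/n$, unfolding each $D^{(b)}(f_1\cdots f_{s+1})$ via the multinomial Leibniz rule as stated, and using the direct formula $f_j^{(u)}(x)=u!\,j^u/(1-jx)^{u+1}$ for $1\le j\le s$ produces the bracketed expression of the theorem.

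The routine ingredients are the binomial expansion, the Stirling generating function, and Leibniz's rule; the remaining work is bookkeeping. The main technical obstacle is justifying that the series rearrangements and the exchange of derivative with summation are legal at $x=1/n$. This reduces to absolute convergence of $\sum_{r\ge 0}{r \brace s}r^l/n^r$ for each fixed $s\le n-1$, which follows from the bound ${r \brace s}\le s^r/s!$: since the generating function $\prod_{j=1}^{s+1}f_j(x)$ has radius of convergence $1/s>1/n$ for $s<n$, all term-by-term differentiations and evaluations at $x=1/n$ lie strictly inside the disk of convergence, and absolute convergence in turn justifies the swaps of $\sum_r$ with the finite sums over $s$ and $l$ performed earlier.
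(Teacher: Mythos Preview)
Your proposal is correct and follows essentially the same route as the paper's proof: both start from the tail-sum formula $\mathbb{E}[\tau_i(G)^p]=\sum_{r\ge 0}[(r+1)^p-r^p]\,\mathbb{P}(\tau_i(G)>r)$, expand via the binomial theorem, convert $\tilde\beta_i$ to $\tilde\alpha_i$ using the $\alpha\beta$-inversions, and then exploit the rational generating function $\sum_r{r\brace s}x^r=x^s\prod_{j=1}^s(1-jx)^{-1}$ together with Leibniz's rule. The only cosmetic difference is in the final step: you insert the factor $r^l$ via the Euler operator identity $\theta^l=\sum_{b=0}^{l}{l\brace b}x^bD^b$, whereas the paper takes the $b$-th ordinary derivative (producing falling factorials $r^{\underline{b}}$) and then uses $r^l=\sum_{b=0}^{l}{l\brace b}r^{\underline{b}}$; these are two phrasings of the same identity, and your convergence justification ($\mathrm{rad}=1/s>1/n$ for $s\le n-1$) is equivalent to the paper's ratio-test argument.
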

\begin{proof}
We consider the tail-sum formula for the $p$-th moment and combine it with~\eqref{obs}, obtaining 
\[\mathbb{E}[\tau_i(G)^p]= \sum_{r=0}^{+\infty}((r+1)^p-r^p)\, \frac{n^r-\beta_i(G,r)}{n^r}.\]
Applying the $\alpha\beta$-inversions of Proposition~\ref{alphabeta inversions} and changing the order of summation, we obtain
\begin{eqnarray*} \mathbb{E}[\tau_i(G)^p]= \displaystyle\sum_{r=0}^{+\infty}\sum_{s=0}^{n-1} s!((r+1)^p-r^p)\left(\frac{1}{n^r}\right){r \brace s}\myalphatilde_i(G,s).
\end{eqnarray*}
Note that this is a series with positive terms.
Expanding the power $(r+1)^p$ with the Binomial Theorem and changing again the order of summation, we get 
\begin{eqnarray}\label{step1}
\mathbb{E}[\tau_i(G)^p]= \displaystyle\sum_{s=0}^{n-1}  \myalphatilde_i(G,s)s!\sum_{l=0}^{p-1}\binom{p}{l}\sum_{r=0}^{+\infty} {r \brace s} r^l\frac{1}{n^r}.
\end{eqnarray}
We now focus on the innermost summation, which we denote by $C_l$. We have $C_l = \tilde{C}_l(1/n)$, where $\smash{\tilde{C}_l(x) = \sum_{r=s}^{+\infty} {r \brace s} r^lx^r}$. 
Note that the evaluation at $1/n$ is well defined, since for $0\le s\le n-1$ and $x\geq 0$ we have $\smash{0 < \tilde{C}_l(x) < \sum_{r=0}^{+\infty} \frac{r^l}{s!}\left(sx\right)^r}$. By the ratio test, this implies that
$\tilde{C}_l(x)$ and all its derivatives have a convergence radius of at least $1/n$.

Consider the rational generating function of the Stirling numbers of the second kind: \begin{equation}\label{eq from infinite to finite product}
\sum_{r=s}^{+\infty} {r \brace s}  x^{r}=\left(\prod_{j=1}^s \frac{1}{1-jx}\right) x^s.
\end{equation}
We wish to take the $b$-th derivative of both sides. 
For the right-hand side
define $ f_j = \frac{1}{1-jx}$ for $1\le j\le s$ and $f_{s+1}=x^s$. Then, by the Leibniz formula, we have
\[
D^{(b)}(f_1\cdots f_{s+1})=\sum_{b_1+\cdots+b_{s+1}=b}\binom{b}{b_1,\dots ,b_{s+1}}\prod f_j^{(b_j)},
\]
with $\smash{f_j^{(u)}(x)}$ as in the statement.

The $b$-th derivate of the left-hand side term is instead given by $$\frac{1}{x^b} \tilde{C}_{\underline{b}}(x) = \sum_{r=k}^{+\infty} {r \brace s} r^{\underline{b}}x^{r-b},$$ where 
$\smash{r^{\underline{b}} = \prod_{i=0}^{b-1}\left(r-i\right)}$ is the $b$-th falling factorial. 
Using the identity
$\smash{r^l = \sum_{j=0}^{l} {l \brace j }r^{\underline{j}}}$ we then obtain
$\smash{ \tilde{C}_l = \sum_{j=0}^{l} {l \brace j}\tilde{C}_{\underline{j}}}$. 
Putting the pieces together and setting $C_{\underline{b}} = \tilde{C}_{\underline{b}}\left(1/n\right)$ one gets
\begin{align*}
\mathbb{E}[\tau_i(G)^p]&=\displaystyle\sum_{s=0}^{n-1}  \myalphatilde_i(G,s)s!\sum_{l=0}^{p-1}\binom{p}{l}C_l=\\
&=
\displaystyle\sum_{s=0}^{n-1}  \myalphatilde_i(G,s)s!\sum_{l=0}^{p-1}\binom{p}{l} \left(\sum_{b=0}^l \left(\frac{1}{n}\right)^{b} {l \brace b } C_{\underline{b}}\right)\\
&=
\displaystyle\sum_{s=0}^{n-1}  \myalphatilde_i(G,s)s!\sum_{l=0}^{p-1}\binom{p}{l} \left(\sum_{b=0}^l \left(\frac{1}{n}\right)^{b}{l \brace b }
D^{(b)}(f_1\cdots f_{s+1})(1/n) \right).
\end{align*}
This completes the proof.
\end{proof}

As a corollary of Theorem~\ref{p-th moment general formula}, we obtain the following formula for the second moment and the variance. 
\begin{corollary}\label{second-moment}\label{cor: variance}
The second moment of the random variable $\mytau_i(G)$ is given by
\[
\mathbb{E}[\mytau_i(G)^2]
 =\displaystyle\sum_{s=0}^{n-1} \left[\frac{ \myalphatilde_i(G,s)}{\binom{n-1}{s}} \left(1+ 2{s}+ 2 \sum_{\ell=1}^s \frac{\ell}{(n-\ell)}\right)\right],
\]
and its variance by
\begin{eqnarray*}
\text{Var}[\mytau_i(G)]=
 \sum_{v=0}^{n-1} \left(\left(\frac{ \myalphatilde_i(G,v)}{\binom{n-1}{v}}\right)\left(2{v}+ 2 \sum_{\ell=1}^v \frac{\ell}{(n-\ell)}-\sum_{s=1}^{n-1} \frac{ \myalphatilde_i(G,s)}{\binom{n-1}{s}}\right)\right).
\end{eqnarray*}
\end{corollary}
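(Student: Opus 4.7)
The plan is to specialise Theorem~\ref{p-th moment general formula} to $p=2$ and carry out the two derivative computations explicitly; the variance is then obtained by subtracting $\mathbb{E}[\tau_i(G)]^2$ using the expression from Proposition~\ref{expected}.

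First I would write out the master formula for $p=2$. Since the outer index $l$ runs from $0$ to $p-1=1$, only two contributions survive. For $l=0$ the inner sum collapses to $D^{(0)}(f_1\cdots f_{s+1})(1/n)$, which is simply the evaluation of the product. Using $f_j(1/n)=n/(n-j)$ for $1\le j\le s$ and $f_{s+1}(1/n)=1/n^s$, one finds
\[
D^{(0)}(f_1\cdots f_{s+1})(1/n)=\frac{1}{n^s}\prod_{j=1}^{s}\frac{n}{n-j}=\frac{(n-s-1)!}{(n-1)!}=\frac{1}{s!\binom{n-1}{s}}\cdot\frac{1}{\text{(combinatorial bookkeeping)}},
\]
so that the $l=0$ contribution to $\mathbb{E}[\tau_i(G)^2]$ is exactly $\sum_{s}\tilde\alpha_i(G,s)/\binom{n-1}{s}$, accounting for the ``$1$'' inside the parentheses of the stated formula.

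Next I would compute $D^{(1)}(f_1\cdots f_{s+1})(1/n)$. The cleanest route is the logarithmic-derivative identity
\[
\frac{D^{(1)}(f_1\cdots f_{s+1})}{f_1\cdots f_{s+1}}=\sum_{j=1}^{s+1}\frac{f_j'}{f_j}.
\]
Computing $f_j'/f_j=j/(1-jx)$ for $1\le j\le s$ (which at $x=1/n$ equals $jn/(n-j)$) and $f_{s+1}'/f_{s+1}=s/x$ (which at $x=1/n$ equals $sn$), one obtains
\[
D^{(1)}(f_1\cdots f_{s+1})(1/n)= D^{(0)}(f_1\cdots f_{s+1})(1/n)\cdot n\left(s+\sum_{\ell=1}^{s}\frac{\ell}{n-\ell}\right).
\]
Only the term $b=1$ of the $l=1$ block contributes (because $\genfrac\{\}{0pt}{}{1}{0}=0$), and inserting the factor $(1/n)^1\binom{2}{1}$ from the master formula cancels the extra $n$. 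Summing the $l=0$ and $l=1$ contributions and factoring $\tilde\alpha_i(G,s)/\binom{n-1}{s}$ produces the advertised second-moment formula.

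For the variance I would simply invoke $\mathrm{Var}[\tau_i(G)]=\mathbb{E}[\tau_i(G)^2]-\mathbb{E}[\tau_i(G)]^2$, and use Proposition~\ref{expected} to rewrite $\mathbb{E}[\tau_i(G)]^2=\bigl(\sum_{s}\tilde\alpha_i(G,s)/\binom{n-1}{s}\bigr)^2$ as a double sum that can be absorbed into the outer sum indexed by $v$. The main obstacle is purely bookkeeping: keeping track of the two Stirling-number values $\genfrac\{\}{0pt}{}{1}{0}=0$, $\genfrac\{\}{0pt}{}{1}{1}=1$, and correctly cancelling the $n$-factor produced by the Leibniz/logarithmic-derivative step. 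Once those cancellations are performed transparently, the stated identities follow without further work.
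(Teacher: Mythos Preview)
Your proposal is correct and is exactly the intended derivation: the paper states this result as an immediate corollary of Theorem~\ref{p-th moment general formula} without giving details, and your specialisation to $p=2$ via the logarithmic-derivative computation of $D^{(1)}$, together with the subtraction of $\mathbb{E}[\tau_i(G)]^2$ using Proposition~\ref{expected} (noting that the $s=0$ term contributes the ``$1$'' that is absorbed when the inner sum starts at $s=1$), fills in precisely the omitted steps. The only cosmetic issue is the placeholder ``combinatorial bookkeeping'' fraction, which should simply read $D^{(0)}(f_1\cdots f_{s+1})(1/n)=\frac{(n-s-1)!}{(n-1)!}=\frac{1}{s!\binom{n-1}{s}}$.
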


Theorem~\ref{p-th moment general formula} shows that all the moments of $\tau_i(G)$ are completely determined by the $\alpha_i(G,s)$'s.
We can prove an even stronger result: the probability mass function itself is completely determined by the $\alpha_i(G,s)$'s.

\begin{theorem}\label{thm: probability density function}
The probability mass function of the random variable $\tau_i(G)$ is given by 
\begin{align*}
\mathbb{P}[\tau_i(G)=r]=\frac{1}{n^r}\sum \limits_{s=1}^{r} \left[ {r \brace s}-n {r-1 \brace s } \right]s! \alpha_i(G,s).
\end{align*}
\end{theorem}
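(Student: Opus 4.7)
The plan is to derive the probability mass function from the tail probabilities given by equation~\eqref{obs} via a telescoping argument, and then apply the $\alpha\beta$-inversion formula from Lemma~\ref{alphabeta inversions} to rewrite everything in terms of $\alpha_i(G,s)$.

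First, I would use the elementary identity
\[
\mathbb{P}[\tau_i(G)=r] = \mathbb{P}[\tau_i(G)>r-1] - \mathbb{P}[\tau_i(G)>r],
\]
which, combined with the formula $\mathbb{P}(\tau_i(G)>r) = (n^r-\beta_i(G,r))/n^r$ from~\eqref{obs}, yields
\[
\mathbb{P}[\tau_i(G)=r] = \frac{\beta_i(G,r)}{n^r} - \frac{\beta_i(G,r-1)}{n^{r-1}} = \frac{\beta_i(G,r) - n\,\beta_i(G,r-1)}{n^r}.
\]
This step is the structural heart of the argument: it converts distributional information into an explicit difference of the $\beta$'s at consecutive indices, scaled appropriately by~$n^r$.

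Next, I would substitute Part~1 of Lemma~\ref{alphabeta inversions}, writing
\[
\beta_i(G,r)=\sum_{s=1}^{r} {r \brace s} s!\,\alpha_i(G,s), \qquad \beta_i(G,r-1)=\sum_{s=1}^{r-1} {r-1 \brace s} s!\,\alpha_i(G,s),
\]
and combine the two sums into a single summation running from $s=1$ to $s=r$. The second sum's upper limit can be harmlessly extended to $r$ because ${r-1 \brace r}=0$. Pulling out the common factor $s!\,\alpha_i(G,s)$ immediately produces the claimed identity
\[
\mathbb{P}[\tau_i(G)=r]=\frac{1}{n^r}\sum_{s=1}^{r} \left[ {r \brace s}-n {r-1 \brace s } \right]s!\,\alpha_i(G,s).
\]

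Honestly, I do not expect any real obstacle here: everything follows mechanically once one observes the telescoping relation on tail probabilities. The only minor care point is the bookkeeping on the upper summation limit when merging the two $\beta$-expressions, which is resolved by the vanishing of ${r-1 \brace r}$. Compared to Theorem~\ref{p-th moment general formula}, which required careful manipulation of generating functions and Leibniz's rule, this proof is strikingly short and serves primarily as a clean corollary of equation~\eqref{obs} and the $\alpha\beta$-inversion.
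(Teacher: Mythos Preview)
Your proposal is correct and follows exactly the same approach as the paper: the paper's proof consists of a single sentence invoking the telescoping identity $\mathbb{P}[\tau_i(G)=r]=\mathbb{P}[\tau_i(G)>r-1]-\mathbb{P}[\tau_i(G)>r]$, equation~\eqref{obs}, and the $\alpha\beta$-inversions of Lemma~\ref{alphabeta inversions}. Your write-up simply spells out the intermediate computations that the paper leaves implicit.
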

\begin{proof}
Since $\mathbb{P}[\tau_i(G)=r]=\mathbb{P}[\tau_i(G)>r-1]-\mathbb{P}[\tau_i(G)>r]$, the claim follows immediately from formula \eqref{obs} and the $\alpha\beta$-inversions of Proposition \ref{alphabeta inversions}. 
\end{proof}

\section{MDS, Hamming and Simplex codes}\label{famous codes}

In this section, we apply our results to MDS, Hamming, and simplex codes. 
For MDS codes, we find an easy formula for the variance of $\tau_i(G)$ in terms of the harmonic numbers. 
We give an explicit formula for the probability mass function for each of the three families and conclude the section by comparing the probability mass functions for these families of codes. While these codes are \textit{recovery balanced} in the sense of~\cite[Definition 2]{GruicaBarLevRavagnaniYaakobi2024} and are therefore undistinguishable from the point of view of the expectation, the mass functions they induce are different.

\subsection{MDS codes} 
Suppose that $G$ is in systematic form and generates an $[n,k]_q$ MDS code.
In \cite[Corollary 1]{GruicaBarLevRavagnaniYaakobi2024}, the values $\alpha_i(G,s)$ have been computed as
\begin{equation}\label{eq:alphaMDS}
    \myalpha_i(G,s) = 
\begin{cases}
      \displaystyle \binom{n-1}{s-1} & \text{if}\ s < k, \\
      \displaystyle \binom{n}{s} & \text{if}\ s \geq k.
\end{cases}
\end{equation}

We provide a compact formula for the variance of systematic MDS codes.
\begin{corollary}
Let $G$ be the generator matrix in systematic form of an MDS $[n,k]_q$ code. The variance of the random variable $\mytau_i(G)$ is given by
\begin{equation}\label{eq:MDS variance}
\text{Var}[\mytau_i(G)]= 2n(n-k) (H_{n-k-1}-H_{n-1}) +2nk-(k-1)k;
\end{equation}
where $H_n$ is the $n$-th harmonic number.
\end{corollary}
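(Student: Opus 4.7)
The plan is to specialize the variance formula of Corollary~\ref{cor: variance} to the MDS case using~\eqref{eq:alphaMDS}, and then to collapse the resulting double sum into harmonic numbers by an elementary partial-fraction manipulation. This is routine in spirit, but the target expression involves $H_{n-k-1}$ rather than $H_{n-k}$, so some care is needed at the end.

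First, I would compute $\tilde{\alpha}_i(G,s) = \binom{n}{s} - \alpha_i(G,s)$ from~\eqref{eq:alphaMDS}. Using Pascal's identity, one gets $\tilde{\alpha}_i(G,s) = \binom{n-1}{s}$ for $s < k$ and $\tilde{\alpha}_i(G,s) = 0$ for $s \geq k$. In particular, the ratio $\tilde{\alpha}_i(G,v)/\binom{n-1}{v}$ equals $1$ for $0 \le v \le k-1$ and $0$ otherwise, so $\sum_{s=1}^{n-1}\tilde{\alpha}_i(G,s)/\binom{n-1}{s} = k-1$. Substituting into Corollary~\ref{cor: variance} gives
\[
\text{Var}[\tau_i(G)] = \sum_{v=0}^{k-1}\left(2v + 2\sum_{\ell=1}^v \frac{\ell}{n-\ell} - (k-1)\right).
\]
The two non-harmonic contributions $\sum_{v=0}^{k-1} 2v = k(k-1)$ and $\sum_{v=0}^{k-1}(k-1) = k(k-1)$ cancel, leaving the task of evaluating $S := \sum_{v=0}^{k-1}\sum_{\ell=1}^v \frac{\ell}{n-\ell}$.

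Next, I would swap the order of summation so that $S = \sum_{\ell=1}^{k-1} \frac{\ell(k-\ell)}{n-\ell}$. The key algebraic step is the identity
\[
\frac{\ell(k-\ell)}{n-\ell} = (n-k) + \ell - \frac{n(n-k)}{n-\ell},
\]
obtained by writing $\ell = n - (n-\ell)$ and $k-\ell = (n-\ell) - (n-k)$. Summing over $\ell = 1, \ldots, k-1$ and recognizing $\sum_{\ell=1}^{k-1}\frac{1}{n-\ell} = H_{n-1} - H_{n-k}$ yields
\[
S = (n-k)(k-1) + \tfrac{(k-1)k}{2} - n(n-k)\bigl(H_{n-1} - H_{n-k}\bigr),
\]
so that $\text{Var}[\tau_i(G)] = 2S$ is already in a closed form involving harmonic numbers.

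The final step, and the main obstacle if any, is the bookkeeping to match the stated expression. Using $H_{n-k} = H_{n-k-1} + \frac{1}{n-k}$, the harmonic term becomes $2n(n-k)(H_{n-k-1} - H_{n-1}) + 2n$, and I would verify
\[
2(n-k)(k-1) + (k-1)k + 2n = 2nk - (k-1)k
\]
by a short direct expansion, which confirms~\eqref{eq:MDS variance}.
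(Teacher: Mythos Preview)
Your proof is correct and follows essentially the same route as the paper: specialize Corollary~\ref{cor: variance} via~\eqref{eq:alphaMDS} to reduce $\text{Var}[\tau_i(G)]$ to the double sum $2\sum_{v=0}^{k-1}\sum_{\ell=1}^v \frac{\ell}{n-\ell}$, then collapse it to harmonic numbers. The only tactical difference is that the paper substitutes $\ell\mapsto n-\ell$ and invokes the identity $\sum_{m=1}^{N}H_m=(N+1)H_N-N$, whereas you swap the order of summation and use the partial-fraction decomposition $\frac{\ell(k-\ell)}{n-\ell}=(n-k)+\ell-\frac{n(n-k)}{n-\ell}$; both are routine and lead to the same closed form.
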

\begin{proof}
By Equation~\eqref{eq:alphaMDS} and Corollary~\ref{cor: variance}, we have
\[
\text{Var}[\tau_i(G)]= 2 \sum_{v=0}^{k-1} \sum_{\ell=n-v}^{n-1} \left(\frac{n}{\ell}-1\right).
\]
Noticing that $\sum_{\ell=n-v}^{n-1} \frac{1}{\ell}=H_{n-1}-H_{n-v-1}$, the proof follows from a straightforward computation and using the identity 
$ \sum _{k=1}^{n}H_{k}=(n+1)H_{n}-n$.
\end{proof}

\begin{proposition}\label{MDS_density}
Let $G$ be the generator matrix in  systematic form of an MDS $[n,k]_q$ code. The probability mass function of the random variable $\tau_i(G)$ is given by 
{\small
\begin{align}
    \mathbb{P}[\tau_i(G)=r] =
    \begin{cases}
        \displaystyle \left(\frac{n-1}{n}\right)^{r-1} \frac{1}{n} & \mbox{if } r < k, \\[10pt]
        \displaystyle \left( \sum_{l=1}^{k-2} \binom{n-1}{l} {r-1 \brace l} l! 
        + \binom{n-1}{k-1} {r-1 \brace k-1} (n-(k-1))(k-1)! \right) 
        \frac{1}{n^{r}} & \mbox{if } r \geq k.
    \end{cases}
\end{align}}
\end{proposition}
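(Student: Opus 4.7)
The plan is to combine Theorem~\ref{thm: probability density function} with the MDS values of $\myalpha_i(G,s)$ recorded in~\eqref{eq:alphaMDS}, splitting the argument according to whether $r<k$ or $r\geq k$. The key combinatorial tool throughout is the classical identity
\[
\sum_{s=0}^{r}{r \brace s}\,s!\,\binom{x}{s}=x^r,
\]
used with $x=n$ and $x=n-1$. I will also need Pascal's identity in the form $\binom{n-1}{s-1}=\binom{n}{s}-\binom{n-1}{s}$ and the Stirling recurrence ${r \brace s}=s{r-1 \brace s}+{r-1 \brace s-1}$.

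For $r<k$, every index in the summation of Theorem~\ref{thm: probability density function} satisfies $s\leq r<k$, so $\myalpha_i(G,s)=\binom{n-1}{s-1}$ throughout. Applying Pascal's identity splits the PMF formula into a $\binom{n}{s}$-piece and a $\binom{n-1}{s}$-piece. Using ${r \brace 0}=0$ for $r\geq 1$ and the identity above twice, the $\binom{n}{s}$-piece evaluates to $n^r-n\cdot n^{r-1}=0$, and the $\binom{n-1}{s}$-piece evaluates to $(n-1)^r-n(n-1)^{r-1}=-(n-1)^{r-1}$. After the sign coming from Pascal and division by $n^r$, this yields the geometric expression $\bigl(\tfrac{n-1}{n}\bigr)^{r-1}\tfrac{1}{n}$ claimed in the statement.

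For $r\geq k$, I would split the summation at $s=k$ and apply Pascal's identity only to the terms with $s<k$. The $\binom{n}{s}$ contributions from both halves reassemble into the full sum $\sum_{s=1}^{r}[{r \brace s}-n{r-1 \brace s}]s!\binom{n}{s}$, which collapses to $0$ by the argument just used. What remains is
\[
-\sum_{s=1}^{k-1}\bigl[{r \brace s}-n{r-1 \brace s}\bigr]s!\,\binom{n-1}{s}.
\]
Invoking the Stirling recurrence rewrites the bracket as $(s-n){r-1 \brace s}+{r-1 \brace s-1}$. After reindexing the ${r-1 \brace s-1}$ contribution via $s\mapsto s+1$ and using the elementary identity $(s+1)!\binom{n-1}{s+1}=(n-1-s)\,s!\binom{n-1}{s}$, the coefficients of ${r-1 \brace s}s!\binom{n-1}{s}$ for $1\leq s\leq k-2$ simplify to $(n-s)-(n-1-s)=1$, while the boundary index $s=k-1$ carries the weight $n-k+1$. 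This produces exactly the formula in the proposition, with $l$ replacing $s$.

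The main technical obstacle is the careful bookkeeping in the $r\geq k$ case: tracking the boundary contributions at $s=1$ (which vanishes, since ${r-1 \brace 0}=0$ for $r\geq 2$) and at $s=k-1$, and verifying that the internal telescoping produces uniform unit weight across the sum from $l=1$ to $k-2$. Beyond that, the argument is a routine but delicate sequence of identity substitutions, and the uniform collapse of the $\binom{n}{s}$-piece via the Stirling identity is the unifying trick that keeps both cases short.
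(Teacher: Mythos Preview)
Your argument is correct, but it follows a genuinely different route from the paper's proof. The paper argues by direct combinatorial counting: for $r<k$ the MDS property forces the first $r-1$ draws to avoid $G^i$ and the $r$-th to hit it; for $r\geq k$ the paper conditions on the number $l$ of distinct columns appearing in the first $r-1$ draws (necessarily $l\leq k-1$ and none equal to $G^i$), counts such lists as $\binom{n-1}{l}{r-1\brace l}l!$, and then counts admissible $r$-th draws ($1$ choice if $l\leq k-2$, and $n-(k-1)$ choices if $l=k-1$). Your approach instead feeds the MDS values of $\myalpha_i(G,s)$ into the general PMF formula of Theorem~\ref{thm: probability density function} and collapses the result via the surjection identity $\sum_s{r\brace s}s!\binom{x}{s}=x^r$ and the Stirling recurrence. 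The paper's argument is shorter and gives a direct probabilistic interpretation of each term in the final formula; your route is more mechanical but has the virtue of exercising Theorem~\ref{thm: probability density function} on a concrete example and showing that the general machinery really does recover closed forms. One small point: in the $r<k$ branch your use of $\sum_{s\geq 1}{r-1\brace s}s!\binom{x}{s}=x^{r-1}$ tacitly assumes $r\geq 2$ (since ${0\brace 0}=1$, not $0$); the case $r=1$ should be checked separately, though it is immediate.
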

\begin{proof}
Assume first $r<k$.
Since any $r$ columns of $G$ are linearly independent, the only way to obtain the $i$-th information strand in the span of the selected columns, is to select any of the $n-1$ columns different from the $i$-th column in the first $r-1$ selections, and the $i$-th column in the $r$-th selection.
Assume now $r\geq k$. 
If in the first $r-1$ draws we selected $l\leq k-2$ distinct columns of $G$ (necessarily all distinct from the $i$-th column $G^i$), then the only way for $G^i$ to be in the span of length $r$ list is to draw $G^i$ as the $r$-th selection.
If in the first $r-1$ draws we selected $l\leq k-2$ distinct columns of $G$ (necessarily all distinct from the $i$-th column $G^i$), then for any choice of a column distinct from the $k-1$ already selected as $r$-th draw, $G^i$ will be in the span of length $r$ list.
The number of ordered lists of length $r$ with $l\leq k-1$ distinct elements, taken from a set of $n-1$ elements, is given by the number of ways of choosing the $l$ elements times the number of ordered assignments of the $r-1$ positions to the $l$ elements, with each element taken at least once (that is, ${r-1 \brace l}l!$).
This completes the proof.
\end{proof}
\begin{remark}
When we compare Proposition~\ref{MDS_density} with~\cite[Lemma 2]{BarLev2025CoverYB} (the case when no coding is used), we see that for $r<k$, the encoding with an MDS code has no advantage with respect to the encoding with the identity code. For $r\geq k$, the values and the probability mass functions of the MDS codes with parameters $[7,3,5]_8$, $[7,4,4]_8$, and the identity code are shown in Table~\ref{tab:PDF} and Figure~\ref{fig:PDF}, respectively.
\end{remark}

\subsection{Hamming codes} 
To analyze Hamming codes, we start by observing that the $\alpha_i(G,s)$'s can be obtained from other parameters introduced in \cite[Lemma 2]{GruicaBarLevRavagnaniYaakobi2024}.
For $i\in \{1, \ldots, k\}$, let $R(i) = \{R_1, ..., R_L\}$ be the collection of minimal recovery sets for the $i$-th encoded strand, where $L$ is their number. We denote by $\xi_i(j, s)$ the number of subsets $S \subseteq \{1, \ldots, L\}$ of cardinality $j$ such that $\bigcup_{h \in S} R_h$ has cardinality $s$.

In other words,
\[\xi_i(j, s) = \left| \left\{S \subseteq [L] : |S| = j, \left| \bigcup_{h \in S} R_h \right| = s\right\} \right|.\]
Note that we denote by $\xi_i(j, s)$ what in \cite{GruicaBarLevRavagnaniYaakobi2024} is denoted by $\tilde{\beta_i}(j, s)$, to avoid a notation clash within our paper.

\begin{proposition}\label{alpha from xi}
Let $G$ be an $k \times n$ matrix in systematic form. We have
\[
\myalpha_i(G,s)=\sum_{j=1}^L\sum_{t=1}^s \binom{n-t}{s-t} (-1)^{j+1}\xi_i(j,t).
\]
\end{proposition}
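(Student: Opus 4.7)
The plan is to prove this by straightforward inclusion--exclusion over the collection of minimal recovery sets. Recall that by definition a subset $S \subseteq \{1,\dots,n\}$ is a recovery set for the $i$-th information strand if and only if $S \supseteq R_h$ for some $h \in \{1,\dots,L\}$. So I would define, for each $h$, the event
\[
A_h = \bigl\{ S \subseteq \{1,\dots,n\} : |S| = s,\ R_h \subseteq S\bigr\},
\]
and observe that
\[
\myalpha_i(G,s) \;=\; \Bigl| \bigcup_{h=1}^{L} A_h \Bigr|.
\]

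Next I would apply inclusion--exclusion to the union on the right, obtaining
\[
\myalpha_i(G,s) \;=\; \sum_{\emptyset \neq T \subseteq \{1,\dots,L\}} (-1)^{|T|+1} \Bigl| \bigcap_{h \in T} A_h \Bigr|.
\]
For a fixed $T$, the intersection $\bigcap_{h\in T} A_h$ consists of all $s$-subsets $S$ containing $\bigcup_{h\in T} R_h$. If this union has cardinality $t$, then there are exactly $\binom{n-t}{s-t}$ such sets $S$ (completing the union by choosing the remaining $s-t$ coordinates freely among the $n-t$ available ones); in particular the term vanishes when $t > s$.

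The last step is to regroup the sum by $j = |T|$ and $t = |\bigcup_{h\in T} R_h|$. By the very definition of $\xi_i(j,t)$, the number of subsets $T \subseteq \{1,\dots,L\}$ with $|T| = j$ and $|\bigcup_{h\in T} R_h| = t$ is exactly $\xi_i(j,t)$. Substituting yields
\[
\myalpha_i(G,s) \;=\; \sum_{j=1}^{L} \sum_{t=1}^{s} (-1)^{j+1} \, \xi_i(j,t) \, \binom{n-t}{s-t},
\]
which is the claimed identity. There is no serious obstacle: the only point that requires a touch of care is making sure the inner sum can be truncated at $t=s$ (which is automatic since $\binom{n-t}{s-t}=0$ for $t>s$) and that the lower bound $t\geq 1$ is justified (which follows because any nonempty union of nonempty minimal recovery sets has at least one element).
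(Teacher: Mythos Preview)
Your proof is correct and follows exactly the approach the paper indicates: inclusion--exclusion over the minimal recovery sets. The paper's own proof is a one-line sketch (``the claim is obtained using inclusion-exclusion on the number of minimal recovery sets contained in a recovery set of size $s$''), and you have simply written out the details that the paper leaves implicit.
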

\begin{proof}
Recall that the quantity $\alpha_i(G,j)$ counts the number of recovery sets of size $j$. 
The claim is obtained using inclusion-exclusion on the number of minimal recovery sets contained in a recovery set of size $s$. 
\end{proof}

 The $\xi_i(j,s)$'s for Hamming codes have been computed in \cite{GruicaBarLevRavagnaniYaakobi2024}.
\begin{proposition}[\text{\cite[Proposition 2]{GruicaBarLevRavagnaniYaakobi2024}}]\label{ksi}
Let $G$ be the  generator matrix in systematic form of a $q$-ary Hamming code of length $n$ and redundancy $k$.
For $i \in \{1,\dots,n\}$, $j \in \{1,\dots,q^{k-1} + 1\}$, and $s \in \{1,\dots,(q^k - 1)/(q - 1)\}$ we have
\[
\xi_i(j, s) = \begin{cases}
\gamma(j, v) & \text{if } s = (q^k - q^v)/(q - 1) - 1, \\
\gamma(j - 1, v) & \text{if } s = (q^k - q^v)/(q - 1),
\end{cases}
\]
where
\[
\gamma(j, v) = \begin{bmatrix}
k - 1 \\
v
\end{bmatrix}_q \, 
\sum_{u=v}^{k-1} q^u
\begin{bmatrix}
k - v - 1 \\
u - v
\end{bmatrix}_q
\binom{q^{k-u-1}}{j}
(-1)^{u-v} q^{\binom{u-v}{2}},
\]
and the symbols in square brackets are $q$-binomial coefficients; see~\cite[Definition 3.1]{Andrews84}.
\end{proposition}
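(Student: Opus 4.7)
The plan is to reduce the combinatorial problem to a question about hyperplane arrangements in $\F_q^k$ and then to solve it by $q$-analog Möbius inversion in the subspace lattice. The crucial structural fact we exploit is that the dual of the $q$-ary Hamming code is the simplex code, all of whose nonzero codewords have constant weight $q^{k-1}$.

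First, I would describe the minimal recovery sets for a fixed coordinate $i$. The circuits of the column matroid of $G$ passing through $i$ correspond exactly to supports of nonzero simplex codewords through $i$. Via the usual identification of positions of the Hamming code with points of $PG(k-1,q)$, each such support is the complement in $PG(k-1,q)$ of a hyperplane $H \subset \F_q^k$ that avoids the one-dimensional subspace $V_i$ associated to $i$. There are precisely $q^{k-1}$ such hyperplanes, so the minimal recovery sets for position $i$ are the trivial set $R_0 = \{i\}$ together with $q^{k-1}$ sets of the form $R_h = T_h \setminus \{i\}$, each of size $q^{k-1}-1$, where $T_h$ denotes the codeword support.

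Next, I would split $\xi_i(j,s)$ according to whether $R_0$ belongs to the chosen subset. If it does, then $\bigl|\bigcup R_h\bigr|$ equals $1$ plus the size of the union of $j-1$ non-trivial recovery sets; if it does not, it equals the size of the union of $j$ non-trivial recovery sets. A key observation is that if the non-trivial sets correspond to hyperplanes $H_1,\dots,H_r$ avoiding $V_i$, then in $PG(k-1,q)$ the complement of $H_1 \cap \cdots \cap H_r$ gives $\bigl|\bigcup T_h\bigr| = (q^k - q^v)/(q-1)$, where $v = \dim(H_1 \cap \cdots \cap H_r)$. Thus $\bigl|\bigcup R_h\bigr|$ is either $(q^k - q^v)/(q-1)$ or $(q^k - q^v)/(q-1)-1$, matching the two cases in the proposition and reducing the problem to counting $j$-subsets of hyperplanes avoiding $V_i$ whose intersection has prescribed vector dimension $v$.

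Third, for each fixed $v$ I would compute this count via $q$-Möbius inversion. The stabilizer of $V_i$ in $\GL_k(\F_q)$ acts transitively on the $v$-dimensional subspaces of $\F_q^k$ avoiding $v_i$, of which there are $q^v\gauss{k-1}{v}{q}$, so the count factors as $q^v\gauss{k-1}{v}{q} \cdot b(V)$, where $b(V)$ counts $j$-subsets with intersection equal to a fixed such $V$. The auxiliary quantity $a(V) = \binom{q^{k-\dim V - 1}}{j}$, counting $j$-subsets whose intersection contains $V$, satisfies $a(V) = \sum_{W \supseteq V, v_i \notin W} b(W)$. Applying Möbius inversion with $\mu(V,W) = (-1)^{\dim W - \dim V} q^{\binom{\dim W - \dim V}{2}}$ and reindexing $\dim W = u$ in the quotient $\F_q^k / V \cong \F_q^{k-v}$ (using the formula $q^{u-v}\gauss{k-v-1}{u-v}{q}$ for the number of $(u-v)$-dimensional subspaces of $\F_q^{k-v}$ avoiding the image of $v_i$), the resulting expression is precisely $\gamma(j,v)$.

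The main obstacle is the clean execution of the Möbius inversion under the restriction "$v_i \notin W$": one must simultaneously leverage the transitive group action to reduce $b(V)$ to a function of $\dim V$, and carefully count those extensions $V \subset W$ whose quotients avoid the image of $v_i$, absorbing a factor $q^v$ from the outer count into the sum so as to produce the $q^u$ appearing in $\gamma(j,v)$. Once this bookkeeping is in place, matching the two cases of the piecewise formula is immediate from the decomposition according to whether $R_0$ is in the chosen $j$-subset.
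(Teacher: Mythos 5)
This proposition is stated in the paper only as an imported result, quoted from \cite[Proposition~2]{GruicaBarLevRavagnaniYaakobi2024} with no proof given here, so there is no internal argument to compare yours against; judged on its own, your proof is correct. All three of your steps check out: (i) the minimal recovery sets are $\{i\}$ together with the $q^{k-1}$ sets $T_h\setminus\{i\}$, where $T_h$ is the complement of a hyperplane of $\PG(k-1,q)$ avoiding the point $P_i$ of position $i$ (consistent with $L=q^{k-1}+1$); (ii) a union of $j$ such nontrivial sets has size $(q^k-q^v)/(q-1)-1$ with $v$ the dimension of the intersection of the corresponding hyperplanes, and adjoining $\{i\}$ restores the point $P_i$, giving $(q^k-q^v)/(q-1)$; (iii) the counts are right: there are $q^v\gauss{k-1}{v}{q}$ subspaces of dimension $v$ avoiding $P_i$, a $u$-dimensional $W$ avoiding $P_i$ lies on exactly $q^{k-u-1}$ hyperplanes avoiding $P_i$, so $a(W)=\binom{q^{k-u-1}}{j}$, and the number of $u$-dimensional $W\supseteq V$ avoiding $P_i$ is $q^{u-v}\gauss{k-v-1}{u-v}{q}$. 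The step you flag as the main obstacle is in fact harmless: if $W\supseteq V$ avoids $P_i$, then so does every subspace between $V$ and $W$, so intervals of your restricted poset coincide with intervals of the full subspace lattice and the usual M\"obius function $(-1)^{d}q^{\binom{d}{2}}$ applies (equivalently, $a(W)=0$ for every $W$ containing $P_i$, so those terms vanish in the unrestricted inversion); assembling then gives exactly $\gamma(j,v)$, and the $q^v$ from the outer count produces the $q^u$ as you say. Two boundary points are worth recording, though they concern the quoted statement rather than your argument: the subset $\{R_0\}$ alone ($j=1$, $s=1$) is of neither displayed form, and the two cases overlap at $s=(q^k-q)/(q-1)=n-1$ (take $v=0$ in the first and $v=1$ in the second), where your decomposition shows the count is actually $\gamma(j,0)+\gamma(j-1,1)$ --- for instance $4+6=10$ when $q=2$, $k=3$, $j=3$ --- so a careful write-up should state how that value of $s$ is to be read.
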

For a $q$-ary Hamming code, the number of minimal recovery sets for each given strand is $L=q^{k-1}+1$. 
The following result is therefore a corollary of Theorem~\ref{thm: probability density function} and Proposition \ref{alpha from xi}.

\begin{corollary}\label{pdfhamming}
Let $G$ be the  generator matrix in systematic form of a $q$-ary Hamming code of length $n$ and redundancy $k$. The probability mass function of the random variable $\tau_i(G)$ is given by 
\begin{align*}
\mathbb{P}[\tau_i(G)=r]=\left(\sum \limits_{s=1}^{r}  \left[ {r \brace s}-n {r-1 \brace s } \right]s!\sum_{j=1}^{q^{k-1}+1}\sum_{t=1}^s \binom{n-t}{s-t} (-1)^{j+1}\xi_i(j,t)\right)\frac{1}{n^r}.
\end{align*}
\end{corollary}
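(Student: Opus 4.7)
The plan is a direct chain of substitutions: both main ingredients are already proved earlier in the paper, so the corollary reduces to assembling them correctly and verifying that their hypotheses are met in the Hamming setting.

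First I would invoke Theorem~\ref{thm: probability density function}, which gives the probability mass function of $\tau_i(G)$ for \emph{any} generator matrix $G$ entirely in terms of the quantities $\alpha_i(G,s)$ and the Stirling numbers of the second kind $\genfrac\{\}{0pt}{}{r}{s}$. This is a general statement with no code-specific hypothesis, so it applies verbatim to a Hamming generator matrix $G$ in systematic form. The task is thus reduced to plugging in the correct values of $\alpha_i(G,s)$.

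Next I would apply Proposition~\ref{alpha from xi}, which expresses $\alpha_i(G,s)$ as an inclusion--exclusion sum over the collection $R(i)=\{R_1,\ldots,R_L\}$ of minimal recovery sets for the $i$-th coordinate, with the sum on $j\in\{1,\ldots,L\}$ encoding the number of minimal recovery sets that one forces to lie inside the chosen $s$-set and the sum on $t$ encoding the size of the union of these $j$ forced minimal sets. The hypothesis of Proposition~\ref{alpha from xi} is exactly that $G$ is in systematic form, which is what Corollary~\ref{pdfhamming} assumes, so the substitution is legitimate. Finally I would use the structural fact, recorded just before the statement of the corollary, that for a $q$-ary Hamming code of redundancy $k$ the number of minimal recovery sets for each information strand is $L=q^{k-1}+1$; this determines the upper limit of the $j$-sum.

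Putting these three ingredients together produces exactly the claimed formula
\[
\mathbb{P}[\tau_i(G)=r]=\left(\sum_{s=1}^{r}\left[\genfrac\{\}{0pt}{}{r}{s}-n\genfrac\{\}{0pt}{}{r-1}{s}\right]s!\sum_{j=1}^{q^{k-1}+1}\sum_{t=1}^{s}\binom{n-t}{s-t}(-1)^{j+1}\xi_i(j,t)\right)\frac{1}{n^r}.
\]
There is no genuine obstacle here, since both Theorem~\ref{thm: probability density function} and Proposition~\ref{alpha from xi} are already established; the only verification is the systematic-form hypothesis and the value of $L$, both of which are immediate. If a fully explicit expression is desired, one may further substitute the closed-form values of $\xi_i(j,t)$ from Proposition~\ref{ksi}, but I would leave the statement in the current compact form, which cleanly separates the general combinatorial machinery (the Stirling-number kernel and the $\alpha$-to-$\xi$ inclusion--exclusion) from the code-specific input (the values $\xi_i(j,t)$ for Hamming codes).
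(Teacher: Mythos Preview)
Your proposal is correct and follows exactly the paper's own argument: the paper states the corollary as an immediate consequence of Theorem~\ref{thm: probability density function} and Proposition~\ref{alpha from xi}, together with the fact that $L=q^{k-1}+1$ for Hamming codes, which is precisely the chain of substitutions you describe.
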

\subsection{Simplex code}
We conclude by computing the mass function associated with simplex codes.
In~\cite{GruicaBarLevRavagnaniYaakobi2024}, the authors compute their expected value $\mathbb{E}[\tau_i(G)]$ by computing the $\alpha_i(G,s)$'s as follows.

\begin{proposition}\cite[Proposition 3]{GruicaBarLevRavagnaniYaakobi2024}
Let $G$ be an $k \times n$ matrix written in a systematic form that generates a $q$-ary simplex code. For all $i, s \in \{1,\dots,n\}$, we have  
\[
\alpha_i(G,s)= \sum_{d=1}^{s} \gauss{ k-1}{d-1}{q}  
\sum \limits_{r=1}^{d} \gauss{ d}{r}{q}   
\binom{\frac{q^r - 1}{q-1}}{s} (-1)^{d-r} q^{\binom{d-r}{2}}.
\]
\end{proposition}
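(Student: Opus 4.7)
The plan is to exploit the projective-geometry description of the simplex code and reduce the computation of $\alpha_i(G,s)$ to a Möbius inversion on the subspace lattice of $\F_q^k$. First I would observe that, up to scalars, the columns of a generator matrix $G$ of the $q$-ary simplex code of dimension $k$ are representatives of the $(q^k-1)/(q-1)$ points of $\PG(k-1,q)$; since $G$ is in systematic form, the column $G^i$ spans the line $\langle e_i\rangle$. Consequently $\alpha_i(G,s)$ equals the number of $s$-subsets of points of $\PG(k-1,q)$ whose linear span contains the point $[e_i]$.

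Next, I would stratify these $s$-subsets by the exact subspace $U\subseteq \F_q^k$ that they span. If $\dim U=d$, the condition $[e_i]\in U$ amounts to $U$ containing the fixed line $\langle e_i\rangle$; the number of such $d$-dimensional subspaces equals the number of $(d-1)$-dimensional subspaces of the quotient $\F_q^k/\langle e_i\rangle\cong \F_q^{k-1}$, namely $\gauss{k-1}{d-1}{q}$. For each such $U$, it remains to count $s$-subsets of the $(q^d-1)/(q-1)$ points of $U$ whose linear span is all of $U$.

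To obtain this count, I would apply Möbius inversion on the subspace lattice of $U$. Writing $h(V)$ for the number of $s$-subsets of points of $U$ spanning exactly $V\subseteq U$ and setting $N(V,s)=\binom{(q^{\dim V}-1)/(q-1)}{s}$, one has $N(U,s)=\sum_{V\subseteq U}h(V)$. Using the classical Möbius function $\mu(V,U)=(-1)^{d-r}q^{\binom{d-r}{2}}$ on the subspace lattice (with $r=\dim V$) together with the fact that $U$ contains $\gauss{d}{r}{q}$ subspaces of dimension $r$, inversion yields
\[
h(U)=\sum_{r=1}^{d}\gauss{d}{r}{q}(-1)^{d-r}q^{\binom{d-r}{2}}\binom{(q^r-1)/(q-1)}{s}.
\]
Multiplying by the count $\gauss{k-1}{d-1}{q}$ of $d$-dimensional subspaces through $\langle e_i\rangle$ and summing over $d$—truncated at $d=s$, since $s$ points cannot span a subspace of dimension exceeding $s$—produces exactly the stated formula.

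The main obstacle is the Möbius inversion step: one has to correctly identify the Möbius function of the subspace lattice and justify that the resulting alternating sum isolates the $s$-subsets whose projective span is the full subspace $U$. Once this is in hand, the remainder of the argument is bookkeeping with Gaussian binomial coefficients and the bijection between $d$-dimensional subspaces of $\F_q^k$ containing a fixed line and $(d-1)$-dimensional subspaces of the quotient.
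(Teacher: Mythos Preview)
The paper does not actually prove this proposition; it is quoted verbatim from \cite{GruicaBarLevRavagnaniYaakobi2024} and used without argument, so there is no in-paper proof to compare against.

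That said, your proposal is correct and is essentially the natural way to establish the formula. The identification of the columns of the simplex generator matrix with the points of $\PG(k-1,q)$ is standard, and the two counting ingredients you isolate are exactly right: (i) the $d$-dimensional subspaces of $\F_q^k$ containing the fixed line $\langle e_i\rangle$ are in bijection with the $(d-1)$-dimensional subspaces of $\F_q^{k}/\langle e_i\rangle$, giving the factor $\gauss{k-1}{d-1}{q}$; and (ii) the number of $s$-subsets of points of a $d$-dimensional space $U$ whose span is exactly $U$ is obtained by M\"obius inversion on the subspace lattice, using $\mu(V,U)=(-1)^{d-r}q^{\binom{d-r}{2}}$. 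Your handling of the range of $d$ is also fine: for $d>k$ the Gaussian binomial $\gauss{k-1}{d-1}{q}$ vanishes, and the $r=0$ term in the inner sum contributes nothing since $\binom{0}{s}=0$ for $s\ge 1$, so truncating the inner sum at $r=1$ is harmless. There is no missing idea; the only thing one might add for completeness is an explicit statement of the M\"obius function of the subspace lattice with a reference, which you already flag as the one nontrivial input.
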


 Using Theorem~\ref{thm: probability density function}, we obtain an explicit formula for the probability mass function of the $q$-ary Simplex code.

\begin{corollary}\label{simplex codes probability density function}
Let $G$ be an $k \times n$ matrix in systematic form that generates the $q$-ary simplex code. The probability mass function of the random variable $\tau_i(G)$ is given by
\begin{align*}
\mathbb{P}[\tau_i(G)=r]=\left(\displaystyle \sum \limits_{s=1}^{r} \left[ {r \brace s}-n {r-1 \brace s } \right] s! \sum \limits_{d=1}^{s}  \gauss{k-1}{d-1}{q}  \sum \limits_{j=1}^{d}  \gauss{d}{j}{q}  \binom{\frac{q^j - 1}{q-1}}{s}  (-1)^{d-j}  q^{\binom{d-j}{2}}\right)\frac{1}{n^r}.
\end{align*}
\end{corollary}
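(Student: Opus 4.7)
\bigskip

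\noindent\textbf{Proof proposal.} The plan is to obtain the formula by a direct substitution, combining two ingredients already available in the paper. First, I would invoke Theorem~\ref{thm: probability density function}, which expresses the probability mass function of $\tau_i(G)$ for an arbitrary rank $k$ generator matrix $G\in\F_q^{k\times n}$ purely in terms of the quantities $\alpha_i(G,s)$, namely
\[
\mathbb{P}[\tau_i(G)=r]=\frac{1}{n^r}\sum_{s=1}^{r} \left[{r \brace s}-n {r-1 \brace s}\right] s!\,\alpha_i(G,s).
\]
Since this identity holds for every rank $k$ matrix $G$, in particular it applies to the generator matrix in systematic form of the $q$-ary simplex code considered in the statement.

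Next, I would plug into the right-hand side of the formula above the closed expression for $\alpha_i(G,s)$ in the simplex case, which is exactly the content of Proposition~3 recalled immediately before the corollary:
\[
\alpha_i(G,s)=\sum_{d=1}^{s}\gauss{k-1}{d-1}{q}\sum_{j=1}^{d}\gauss{d}{j}{q}\binom{\frac{q^j-1}{q-1}}{s}(-1)^{d-j}q^{\binom{d-j}{2}}.
\]
Interchanging the outer factor $\frac{1}{n^r}$ with the summation over $s$ and keeping the three nested sums in $s$, $d$ and $j$ produces verbatim the expression in the statement of Corollary~\ref{simplex codes probability density function}.

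The argument is routine and there is no real obstacle, since both the combinatorial identity for $\mathbb{P}[\tau_i(G)=r]$ (Theorem~\ref{thm: probability density function}) and the formula for $\alpha_i(G,s)$ (Proposition~3) have already been established. The only minor point to double-check is that the indexing variable used in the inner sum of Proposition~3 (written there as $r$) must be renamed to avoid a clash with the variable $r$ denoting the argument of the mass function; in the statement of the corollary this has already been done by using $j$ in its place. Once this cosmetic renaming is performed, the proof is complete by direct substitution.
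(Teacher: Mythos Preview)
Your proposal is correct and matches the paper's approach exactly: the corollary is obtained by substituting the closed form for $\alpha_i(G,s)$ from Proposition~3 into the general probability mass function formula of Theorem~\ref{thm: probability density function}. Your observation about renaming the inner summation index to avoid a clash with $r$ is also on point.
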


\subsection{Comparisons}
To illustrate Theorems \ref{p-th moment general formula} and \ref{thm: probability density function}, we provide a comparison of the following codes: the $[7,7]_2$ identity code generated by a $7\times 7$ identity; a systematic $[7,3,5]_8$ MDS code; a systematic $[7,4,4]_8$ MDS code; a systematic $[7,4,3]_2$ Hamming code and the $[7,3,4]_2$ Simplex code.

\begin{figure}[h!] \label{PDFplots}
\centering 
\includegraphics[width=0.7\textwidth]{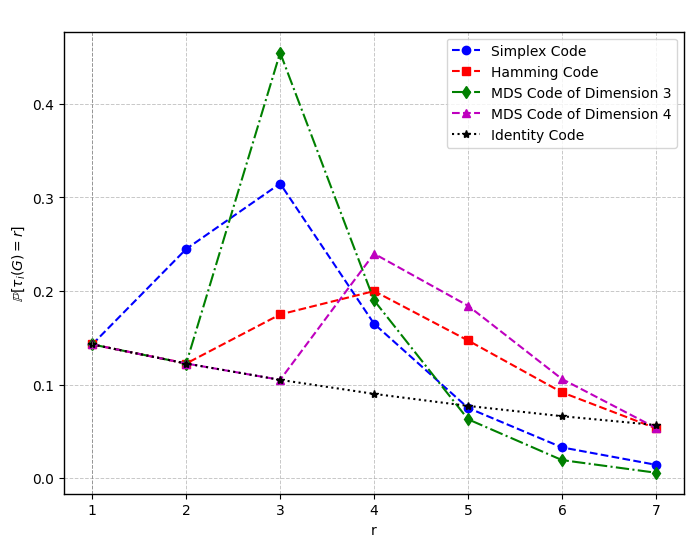} 
\caption{Probability mass functions of Simplex, Hamming, MDS, and identity codes.}
\label{fig:PDF}
\end{figure}

\begin{table}[h!] \label{PDFvalues}
\centering
\begin{tabular}{|c|c|c||c|c||c|}
\hline
\multicolumn{3}{|c||}{$k=3$  } & \multicolumn{2}{|c||}{$k=4$}& \multicolumn{1}{|c|}{$k=7$}\\ \hline
$r$ & MDS Code &Simplex Code & MDS Code& Hamming Code & Identity Code  \\
\hline
1	& 0.143 & 0.143	&	0.143&	0.143 & 0.143	\\ \hline
2	& 0.122 & 0.245	&	0.122&  0.122 &	0.122 \\ \hline
3	& 0.455 & 0.315	&	0.105&  0.175 & 0.105 \\ \hline
4	& 0.190 & 0.165	&	0.240&	0.200 &	0.090\\ \hline
5	& 0.063 & 0.075	&	0.184&  0.147 &	0.077\\ \hline
6	& 0.019 & 0.033	&	0.106&  0.092 &	0.066 \\ \hline
7	& 0.006 & 0.014	&	0.054&  0.053 &	 0.057\\ \hline
\end{tabular}
 \caption{Probability mass functions  $\mathbb{P}[\tau_i(G)=r]$.\label{tab:PDF}}
\end{table}

Fig.~\ref{fig:PDF} shows the probability mass functions of these five codes.
Table~\ref{tab:PDF} contains the values of the probability mass function $\mathbb{P}[\tau_i(G)=r]$ for $r=1,\ldots,7$ for the five codes.

For $r \in\{1, 2\}$, the MDS code of dimension $3$ gives the same values as the identity code. Also, for $r \in \{1, 2, 3\}$, the MDS code of dimension $4$ matches the identity code. Similarly, when $r \in \{1, 2\}$, the Hamming code also gives the same results as the identity code. Finally, Table~\ref{tab:moments} has the variance and the first four moments for each of the five codes.

\prp{\begin{table}[h!]
    \centering
    \begin{tabular}{|l|c|c||c|c||c|}
        \hline
        \multicolumn{3}{|c||}{$k=3$  } & \multicolumn{2}{|c||}{$k=4$}& \multicolumn{1}{|c|}{$k=7$}\\ \hline
        &MDS Code &  Simplex Code & MDS Code &  Hamming Code& Identity Code\\
        \hline
Variance	&	1.467	&	2.167	&	4.100	&	5.033	& 42.000\\ \hline
1st Moment	&	3.000	&	3.000	&	4.000	&	4.000	& 7.000\\ \hline
2nd Moment	&	10.467	&	11.167	&	20.100	&	21.033	&91.000\\ \hline
3rd Moment	&	31.293	&	39.458	&	96.245	&	113.665	&1663.000\\ \hline
4th Moment	&	90.423	&	151.014	&	472.261		&	691.369	&40390.429\\ \hline
    \end{tabular}
    \caption{Comparison of moments for different codes.\label{tab:moments}}    
\end{table}
}

\bibliographystyle{abbrvurl}

\bibliography{Bibliography}

\end{document}